\newtheorem*{lemma*}{Lemma}
\providecommand{\keywords}[1]
{
  \small	
  \textbf{\textit{Keywords---}} #1
}
\title{Bayesian averaging of computer models with domain discrepancies: a nuclear physics perspective}
\author{Vojtech Kejzlar,$^{1}$ L{\'e}o Neufcourt,$^{1,2}$  Taps Maiti,$^{1}$ and Frederi Viens$^{1}$ \\
\small $^{1}$Department of Statistics and Probability, Michigan State University \\
\small $^{2}$Facility for Rare Isotope Beams, Michigan State University \\}
\date{}
\begin{document}
\maketitle

\begin{abstract}
This article studies Bayesian model averaging (BMA) in the context of competing expensive computer models in a typical nuclear physics setup. 
While it is well known that BMA accounts for the additional uncertainty of the model itself, we show that it also decreases the posterior variance of the prediction errors via an explicit decomposition. We extend BMA to the situation where the competing models are defined on non-identical study regions. Any model's local forecasting difficulty is offset by predictions obtained from the average model, thus extending individual models to the full domain. We illustrate our methodology via pedagogical simulations and applications to forecasting nuclear observables, which exhibit convincing improvements in both the BMA prediction error and empirical coverage probabilities. 
\end{abstract}

\keywords{Uncertainty quantification; Model mixing; Model uncertainty; Bayesian model calibration; Computer models; Nuclear masses.}

%%%%%%%%%%%%%%%%%%%%%%%%%%%%%%%%%%%%%%%%%%%%%%%%%%%%%%%%%%%%%%%%%%%%%%%%%%%%%%%%%%%%%%%%%%%%%%%%%%%%
\section{Introduction}
\label{Sec:Intro}
%%%%%%%%%%%%%%%%%%%%%%%%%%%%%%%%%%%%%%%%%%%%%%%%%%%%%%%%%%%%%%%%%%%%%%%%%%%%%%%%%%%%%%%%%%%%%%%%%%%%

Interest for model averaging arises in situations with several competing models available to solve the same or similar problems, 
and  no single model can be selected at a desired level of certainty. 
This is a common scenario in many scientific fields concerned with modeling complex systems.
One of the historically dominant solutions is Bayesian Model Averaging (BMA) \citep{KassRaftery1995,BMA, GIBBONS2008973}, which is a natural Bayesian framework when facing a finite or countable number of alternative models. The seminal review work by \cite{Geweke} introduced BMA in econometrics and later in other fields such as political and social sciences; 
BMA has also been applied to the medical sciences \citep{med-Vi14, med-SCH16}, 
ecology and evolution \citep{ecol-Si14,ecol-Ho15}, 
genetics \citep{gen-Vis11, gen-Wen15}, astrophysics \citep{Parkinson2013}, 
fluid dynamics \citep{Radaideh}
and machine learning \citep{ML-Mer11, ML-Hernandez2018}.

Contemporary developments in computing capabilities have meanwhile brought new modeling perspectives, and the rapid surge of models implemented on a computer, which we shall refer to as \textit{computer models}, opened several challenges to BMA. 
Nuclear structure provides stimulating examples \citep{NuclearLandscape,ElectricDipole,NeutronDripline,Neufcourt18-2,Olsen18} which illustrate the canonical difficulties one faces with BMA of computer models that might require hundreds to thousands of core hours for a single evaluation.

Consider a general situation where experimental measurements $(x_i,y_i)_{i=1}^n$ of a physical process $x\mapsto y(x)$ are used to predict its values $y^* = y(x^*)$ at a new input value $x^*$. 
In the nuclear physics context one can typically think of $x=(Z,N)$ marking the isotopes defined by the proton number $Z$ and neutron number $N$, with the output $y(x)$ being the corresponding value of an observable such as a nuclear mass or radius, and $y(x^*)$ extrapolation of this observable.
The Bayesian inference approach is to consider any quantity of interest $\Delta$ under its posterior probability given the data $y = (y_1, \dots, y_n)$ and model $\mathcal{M}$. 
This is simply obtained from Bayes' formula as long as one can express the likelihood $p(y|\Delta, \mathcal{M})$ and provide a prior distribution $\pi(\Delta| \mathcal{M})$. 
Establishing the likelihood $p(y|\Delta, \mathcal{M})$ is done through a model $\mathcal{M}$. 
The quantity $\Delta:=y^*=y(x^*)$ is typically of universal interest, but $\Delta$ can more generally represent any latent quantity at any level of the model.
Various well-known methods such as the Metropolis-Hastings algorithm \citep{Metropolis} or more advanced Monte Carlo Methods such as Hamiltonian Monte-Carlo, or  No U-Turn Sampler \citep{NUTS} can be applied to obtain samples from the posterior distribution $p(\Delta|y, \mathcal{M})$.  

Now, let us consider a scenario with $K$ competing models $\mathcal{M}_1, \dots, \mathcal{M}_K$, and let us assume that there exists a \emph{true} model ${M}$.
The Bayes formula can be applied  to express the posterior probability of a model $\mathcal{M}_k$, given observations $y$, as
\begin{equation} \label{eq-posteriors-model}
p(\mathcal{M}_k|y)
= \frac{p(y|\mathcal{M}_k)\pi(\mathcal{M}_k)}{\sum_{\ell=1}^K p(y|\mathcal{M}_\ell) \pi(\mathcal{M}_\ell)}
\propto_k p(y|\mathcal{M}_k)\pi(\mathcal{M}_k),
\end{equation}
where the symbol $\propto_k$ indicates proportionality with respect to $y$ for fixed $k$. 
Here, the difficulty lies in evaluating the \emph{evidence integral} $p(y|\mathcal{M}_k)$ for each model, and $\pi(\mathcal{M}_k)$ represents the the prior probability that $\mathcal{M}_k$ is the true model $M$; the BMA posterior distribution for any quantity of interest $\Delta$ can then be derived as
\begin{equation} \label{eq-posterior2}
p(\Delta|y) 
= \sum_{k=1}^K p(\Delta|y,\mathcal{M}_k) p(\mathcal{M}_k|y).
\end{equation}
This formula expresses that the actual posterior probability of an observable $\Delta$ is the average of $\Delta$'s posterior distributions given each model, weighted by the model posterior probabilities.
In other words, \eqref{eq-posterior2} is simply a mixture of $K$ distributions, which makes sampling from the BMA posterior density immediate once we obtain posterior samples under each model. 
In particular, the posterior mean of $\Delta$ is given by (see \cite{Park2011})
\begin{equation} \label{BMA-mean}
\mathbb{E}[\Delta|y] = \sum_{k=1}^K  \mathbb{E}[\Delta|y, \mathcal{M}_k] p(\mathcal{M}_k|y),
\end{equation}
and the well-known conditional variance formula yields the posterior variance of $\Delta$, given $y$, as
\begin{eqnarray} \label{BMA-variance}
\mathbb{V}ar[\Delta|y] 
& = & \sum_{k=1}^K p(\mathcal{M}_k|y) \mathbb{V}ar[\Delta|y, \mathcal{M}_k] + \mathbb{V}ar[\mathbb{E}(\Delta|y,M)|y].
\end{eqnarray}
Note that the  term 
$\mathbb{V}ar[\mathbb{E}(\Delta|y,M)|y]$ 
 is the variance of a function of the discrete random variable $M$, which accounts for the model uncertainty. This model uncertainty is not accounted for by individual models. Its inclusion thus allows for a more honest uncertainty quantification (UQ).

In this work, we first describe the Bayesian methodology for analyzing individual computer models including calibration and evidence integral computation with a generalization to account for experimental data coming from several sources (several observables). Then, we perform a systematic analysis of the prediction errors, establishing that BMA is the optimal linear combination (projection) in the $L^2$ sense under the posterior probability distribution, among all possible mixtures of models. Motivated by recurrent scenarios in nuclear structure, we subsequently extend BMA to situations when different models constrain different subsets of the data. 
Lastly, we present a set of pedagogical examples as well as a real-data nuclear-physics application of the BMA methodology highlighting its benefits in terms of improvement of the prediction accuracy and UQ. We frequently use nuclear physics terminology, but we hope the paper will be transparent to scientists from other quantitative disciplines.

%%%%%%%%%%%%%%%%%%%%%%%%%%%%%%%%%%%%%%%%%%%%%%%%%%%%%%%%%%%%%%%%%%%%%%%%%%%%%%%%%%%%%%%%%%%%%%%%%%%%
\section{Review of Bayesian analysis of computer models}
\label{Sec:BACM}
%%%%%%%%%%%%%%%%%%%%%%%%%%%%%%%%%%%%%%%%%%%%%%%%%%%%%%%%%%%%%%%%%%%%%%%%%%%%%%%%%%%%%%%%%%%%%%%%%%%%

%%%%%%%%%%%%%%%%%%%%%%%%%%%%%%%%%%%%%%%%%%%%%%%%%%
\subsection{Calibration}
%%%%%%%%%%%%%%%%%%%%%%%%%%%%%%%%%%%%%%%%%%%%%%%%%%

Over the past two decades, there has been a considerable amount of research dedicated to Bayesian calibration of computer models starting with the seminal work of \cite{KoH} with extensions and applications provided by \cite{Higdon04, Higdon08, Higdon15}. We briefly discuss here the standard Bayesian framework and refer to \cite{KoH} for full treatment.

Let us consider a single computer model $f(x,\theta)$ relying on a parameter vector $\theta$ of dimension $d$ and an input variable $x$ in a finite dimensional space (for simplicity).
Model calibration corresponds to determining the unknown and hypothetical \emph{true} value $\theta^*$ of the parameter $\theta$,
at which the physical process $\zeta(x)$ would satisfy $\zeta(x) = f(x,\theta^*) + \delta(x)$; $\delta(x)$ is the systematic discrepancy of the model whose form is generally unknown. The term "calibration" is broader than the term "estimation", which can imply the use of some well-established statistical methodology. Herein, our notion of calibration is a specific Bayesian estimation methodology, which include a full evaluation of uncertainty for every parameter. Thus we can write the complete statistical model as
\begin{equation} \label{eq-model}
y_i = f(x_i,\theta) + \delta(x_i) + \sigma(x_i) \epsilon_i.
\end{equation}
at each of the $n$ observations $y_1,\dots, y_n$. For brevity, we use $y_i$ for $y(x_i)$. The term $\sigma(x) > 0$ is a scale function to be parametrized and inferred,
and $\epsilon_i$ are independent standard random variables
representing measurement errors, which we assume to be Gaussian.

 One may expect the function $f(x,\theta)$ representing the output of the computer model to be deterministic, given a model and its parameters. 
 This holds in the case of \emph{inexpensive} computer models, where the evaluation of $f(x,\theta)$ takes a reasonably "constant" time.
 For computationally \emph{expensive} models 
 the evaluations of $f(x,\theta)$ cannot be reasonably performed 
 at calibration runtime, and need to be done beforehand, typically on a grid.
 Hence it has become a common practice to emulate the computer model by a Gaussian process $\mathcal{GP}(m(x, \theta), k((x,\theta),(x',\theta')))$ with mean function $m$ and covariance function $k$. 
 In this setup the data also include a number $N$ of runs of the computer model at pre-determined points $\{(\widetilde{x}_1, \widetilde{\theta}_1), \dots, (\widetilde{x}_N, \widetilde{\theta}_N)\}$. Finer approaches typically decompose $m$ on a dense family of basis functions (wavelets, Fourier, polynomials) across the domain of $x$ and $\theta$. As for the covariance function, the most popular choices are the quadratic exponential kernel, the Matern kernels, and fractional Gaussian noise \citep{RasmussenWilliams}. 

The discrepancy function $\delta(x)$ represents the systematic error between the computer model and the physical process. 
While it is intrinsically deterministic, a Gaussian process model is typically imposed for inference. 
However, its inference can be tricky because the coupling of the discrepancy term and calibration parameters makes $\theta^*$ non-identifiable in general. 
Indeed, $\delta(x) = \zeta(x) - f(x, \theta)$ yields the same distribution for $y(x)$ for any choice of $\theta$. Several authors have pointed this out 
and proposed various methods to mitigate the problem including \cite{Ha,Plumee,Wu1,Wu2,Vlid}. Our main goal here, nonetheless, is not to correct the identification of $\theta^*$, but a prediction of new observations with honest uncertainties quantification. For prediction, the identifiability of calibration parameter is not a particular concern.

It can also appear natural to carry out the calibration based on a linearization of the computer model, i.e. in an setup where $f(x,\theta)=x^T\theta$. 
This simple framework is valid in practice, up to a reparametrization, as an approximation of a function $f(x,\theta)$, as long as one can \textit{a priori} localize the parameters around a specified value $\theta_0$, such as the maximum likelihood estimator or the solution of a classical least-squares fit of the computer model. Indeed, in the range of a first order Taylor expansion of $f(x,\theta)$ around $\theta_0$, one can replace the computer model by a linear function of $\theta$ 
\begin{equation}
f(x,\theta) \approx f(x,\theta_0)+\nabla_\theta f(x,\theta_0)^T\cdot (\theta-\theta_0),
\label{eq:approx}
\end{equation}
as long as one has access to the gradients $\nabla_\theta f$.
This method has the advantages to be intuitive, computationally inexpensive, and can be used either as a surrogate in the procedure outlined herein for calibration of inexpensive computer models, or to provide a mean function of a GP emulator for computationally expensive model. We, therefore, shall call this approximation \textit{rapid calibration via linearization}.

\paragraph{Several observables for the same model.} 
The simple notational form of model \eqref{eq-model} obscures the situation where $y$ mixes data of different nature, which can have critical impact on both calibration and model averaging. This can represent different 
types of observables in nuclear physics. Many nuclear structure models are for example optimized over a set of binding energies which represent the minimum energy to disassemble nucleus into unbound neutrons and protons, and a set of root mean square (rms) proton charge radii which can be seen as a measure of the size of an atomic nucleus. 
Suppose that the same model is fitted to $q$ types of observables $y^{o_1},\dots, y^{o_q}$. To highlight the dependency on observable types, we can write
\begin{equation}
y_i^{o_l} = f^{o_l}(x_i, \theta) + \delta(x_i) + \sigma_l(x_i) \epsilon_i, \hspace{1cm} l=1,\cdots, q,
\end{equation}
and the corresponding likelihood can be computed naturally for each data $y_i$ corresponding to an observable $y^{o_l}$. Allowing different types of observables doesn't raise significant differences with the standard case,
but a particular attention is to be brought to the noise scaling parameter $\sigma_l(x)$ which can now vary across observables. In this case the weighting between different observables is done through the relative importance of the scaling functions $\sigma_l(x)$ which can either be 
left as parameters or fixed (e.g., to reported experimental errors, or estimated theoretical variations). 
In the context of nuclear physics 
we refer to the discussion in \cite[Section 3]{JPG}.

%%%%%%%%%%%%%%%%%%%%%%%%%%%%%%%%%%%%%%%%%%%%%%%%%%
\subsection{Computing the evidence integral} 
\label{sub-Bayes-factor}
%%%%%%%%%%%%%%%%%%%%%%%%%%%%%%%%%%%%%%%%%%%%%%%%%%

Suppose the $k^{th}$ model is parametrized  by a vector
$\phi_k\in\mathbb{R}^{d_k}$ for $k = 1, \dots, K$ 
that consists of both calibration parameters and hyperparameters (i.e. parameters used for defining prior distributions for other parameters) 
with likelihood $p(y|\mathcal{M}_k, \phi_k)$. The \emph{evidence integral} \citep{KassRaftery1995,evint-AA13,Fragoso} of the model $\mathcal{M}_k$ is defined as
\begin{equation}
p(y|\mathcal{M}_k) = \int_{\phi_k} p(y|\phi_k,\mathcal{M}_k)\pi(\phi_k|\mathcal{M}_k)d\phi_k. \label{eq-I}
\end{equation}

The numerical evaluation of evidence integrals is challenging in practice, and requires approximation. A natural idea, most commonly used in the literature, and which we have adopted in our applications:, is to use Monte Carlo approximation
\begin{equation}
\widehat{p(y|\mathcal{M}_k)} = \frac{1}{n_{MC}}\sum_{i}p(y|\phi_k^{(i)}, \mathcal{M}_k).
\end{equation}
where $\phi_k^{(i)}$ are i.i.d. samples from the prior $\pi(\phi_k| \mathcal{M}_k)$ for  $i=1,\ldots,n_{MC}$. While this Monte Carlo computation yields reasonable results, it requires separate evaluations of the likelihood on new samples from the prior $\pi(\phi_k| \mathcal{M}_k)$which can be very costly in computing time.

Another frequently used method is the Laplace approximation, which relies on the fact that the integration \eqref{eq-I} has a closed form in the case of a linear regression with Gaussian noise.
It corresponds to a second order Taylor approximation of the log-likelihood around its maximum, which makes the likelihood becomes Gaussian. The Laplace method typically gives very good results for very peaked likelihoods.
We refer the reader to \cite{KassRaftery1995} for an exhaustive survey of classical methods used to compute evidence integral.

%%%%%%%%%%%%%%%%%%%%%%%%%%%%%%%%%%%%%%%%%%%%%%%%%%%%%%%%%%%%%%%%%%%%%%%%%%%%%%%%%%%%%%%%%%%%%%%%%%%%
\section{BMA and prediction error}
%%%%%%%%%%%%%%%%%%%%%%%%%%%%%%%%%%%%%%%%%%%%%%%%%%%%%%%%%%%%%%%%%%%%%%%%%%%%%%%%%%%%%%%%%%%%%%%%%%%%

BMA is only one of various natural ways to deal with several alternative models and to account for model uncertainty, but it does have the property of reducing the Posterior Mean Square Error (PMSE) of prediction of a new observation $y^*$. In this section, we illustrate this property in a clear and concise way.

Let us, for simplicity of notation, consider two competing models $\mathcal{M}_1$ and $\mathcal{M}_2$ - the treatment of multiple models follows from a similar argument, and our verbal descriptions below in this section occasionally refer to the general case without further comment. Denote $\widehat{y_1^*}:=\mathbb{E}[y^*|y, \mathcal{M}_1]$ and $\widehat{y_2^*}:=\mathbb{E}[y^*|y, \mathcal{M}_2]$ as the posterior means of $y^*$ under each model, and let $\widehat{y^*}:=\mathbb{E}[y^*|y]$. We also define $p_k:=p(\mathcal{M}_k|y)$ for $k =1,2$ for the posterior probability of each model. Thus the BMA posterior mean estimator (\ref{BMA-mean}) can be written as $\widehat{y^*} = p_1 \widehat{y_1^*} + p_2 \widehat{y_2^*}$.
The PMSE of $y^*$ is then defined naturally as $\mathbb{E}[(\widehat{y^*} - y^*)^2|y]$ and has the following decomposition.
\begin{lemma*} \label{lemma-factor} For every $\lambda_1, \lambda_2\geq0$ satisfying $\lambda_1 + \lambda_2 = 1$, we have
\begin{equation}
\small
\mathbb{E}[(y^*- \widehat{y^*})^2 |y] 
= 
\mathbb{E}[(y^*- \lambda_1 \widehat{y_1^*} - \lambda_2 \widehat{y_2^*})^2 |y]
- 
[ (\lambda_1-p_1) \widehat{y_1^*} - (\lambda_2-p_2) \widehat{y_2^*} ]^2
\end{equation}
\end{lemma*}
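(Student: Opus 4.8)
The plan is to lean on the single structural fact that $\widehat{y^*}=\mathbb{E}[y^*|y]$ is, by definition, the conditional mean of $y^*$, i.e.\ its orthogonal projection onto the constants in $L^2$ under the posterior; the identity is then a Pythagorean decomposition in disguise. I would first record that, once we condition on $y$, every object except $y^*$ is a fixed number: the model-conditional means $\widehat{y_1^*},\widehat{y_2^*}$, the posterior weights $p_1,p_2$, and the chosen weights $\lambda_1,\lambda_2$ are all $y$-measurable. Abbreviating the arbitrary convex combination as $Z:=\lambda_1\widehat{y_1^*}+\lambda_2\widehat{y_2^*}$, the whole statement becomes a claim comparing the two squared distances from the random point $y^*$ to the two fixed points $\widehat{y^*}$ and $Z$.

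The key step is to expand about $\widehat{y^*}$: writing $y^*-Z=(y^*-\widehat{y^*})+(\widehat{y^*}-Z)$ and squaring gives
\begin{equation*}
(y^*-Z)^2=(y^*-\widehat{y^*})^2+2(\widehat{y^*}-Z)(y^*-\widehat{y^*})+(\widehat{y^*}-Z)^2 .
\end{equation*}
Applying $\mathbb{E}[\,\cdot\,|y]$, the constant $\widehat{y^*}-Z$ pulls out of the cross term and leaves $\mathbb{E}[y^*-\widehat{y^*}|y]=\mathbb{E}[y^*|y]-\widehat{y^*}=0$ by \eqref{BMA-mean}; this vanishing is the crux, and it is exactly where the BMA mean (as opposed to an arbitrary weighting) is used. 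Hence $\mathbb{E}[(y^*-Z)^2|y]=\mathbb{E}[(y^*-\widehat{y^*})^2|y]+(\widehat{y^*}-Z)^2$, which is the claimed identity after transposition. Equivalently one may quote the bias--variance formula $\mathbb{E}[(y^*-c)^2|y]=\mathbb{V}ar[y^*|y]+(c-\mathbb{E}[y^*|y])^2$ for a constant $c$, applied at $c=\widehat{y^*}$ and $c=Z$ and subtracted.

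It remains to identify the residual term. Substituting $\widehat{y^*}=p_1\widehat{y_1^*}+p_2\widehat{y_2^*}$ yields $\widehat{y^*}-Z=(p_1-\lambda_1)\widehat{y_1^*}+(p_2-\lambda_2)\widehat{y_2^*}$, so that $(\widehat{y^*}-Z)^2=[(\lambda_1-p_1)\widehat{y_1^*}+(\lambda_2-p_2)\widehat{y_2^*}]^2$, matching the subtracted square in the statement. The normalizations $\lambda_1+\lambda_2=1=p_1+p_2$ then give $\lambda_2-p_2=-(\lambda_1-p_1)$, which collapses the bracket to the interpretable form $(\lambda_1-p_1)(\widehat{y_1^*}-\widehat{y_2^*})$. (Carrying the constraint through this way suggests the internal sign in the displayed bracket should be a plus rather than a minus.) I anticipate no serious obstacle: the argument is elementary once the projection viewpoint is adopted, and the only care required is the measurability bookkeeping—so that $y^*$ is the sole random quantity under the conditioning—together with a consistent use of the affine constraint on the weights in the final sign check.
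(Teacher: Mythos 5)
Your proof is correct and is essentially the paper's own argument in different clothing: the paper factors $(y^*-\widehat{y^*})^2-(y^*-\lambda_1\widehat{y_1^*}-\lambda_2\widehat{y_2^*})^2$ as a difference of squares and takes conditional expectation, while you expand about the projection $\widehat{y^*}$ and kill the cross term via $\mathbb{E}[y^*-\widehat{y^*}\,|\,y]=0$ --- the same cancellation, packaged as a Pythagorean/bias--variance identity. Your sign observation is also right: the bracket should read $(\lambda_1-p_1)\widehat{y_1^*}+(\lambda_2-p_2)\widehat{y_2^*}$, as confirmed both by the factorization in the paper's own proof and by consistency with \eqref{eq-replace} (taking $(\lambda_1,\lambda_2)=(1,0)$ must yield the correction $p_2^2(\widehat{y_1^*}-\widehat{y_2^*})^2$, which requires the plus sign, equivalently the form $(\lambda_1-p_1)^2(\widehat{y_1^*}-\widehat{y_2^*})^2$); the minus inside the displayed bracket is a typo in the statement.
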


This lemma, proved in Appendix A, shows explicitly that the PMSE of the BMA predictor is smaller than the PMSE associated with any convex combination $\lambda_1 \widehat{y_1^*} + \lambda_2 \widehat{y_2^*}$ of the each of the two models' posterior means. It also measures how much smaller it is, and 
shows that equality holds 
as soon as the convex coefficients $\lambda_k$ are equal to the posterior probabilities $p_k$ of each model, $k=1,2$. Now, applying the Lemma twice, with $(\lambda_1, \lambda_2) = (1, 0)$ and with $(\lambda_1, \lambda_2)=(0, 1)$ we obtain the following dual expressions for the PMSE or the BMA predictor, involving each individual model's PMSE, showing how much smaller the former is compared to the two latter:    
\begin{equation}
\small
\mathbb{E}[(y^*-  \widehat{y_1^*})^2 |y] - p_2^2 (\widehat{y_1^*} - \widehat{y_2^*})^2
= \mathbb{E}[(y^*- \widehat{y^*})^2 |y] 
= \mathbb{E}[(y^*-  \widehat{y_2^*})^2 |y] - p_1^2 (\widehat{y_1^*} - \widehat{y_2^*})^2. 
\label{eq-replace}
\end{equation}

To be more descriptive about these properties, first we record the optimality of BMA's PMSE as the following inequality: 
\begin{equation}
\mathbb{E}[(y^*- \widehat{y^*})^2 |y] \le \mathbb{E}[(y^*- \widehat{y_i^*})^2 |y], \quad i = 1,2.
\end{equation}
The previous inequality is the clearest way to state that the BMA estimator (\ref{BMA-mean}) produces the smallest prediction error, in the PMSE sense, among all the individual models' posterior mean estimators, as long as all those models are used in creating the BMA estimator. We interpret this as a translation of the fact that each model that goes into creating the BMA estimator necessarily ignores model uncertainty. Note that this says nothing about how the BMA estimator would compare to a model not used in its definition.

But more can be said. As we mentioned implicitly, the previous inequality is a weaker statement than the statement in the Lemma, since the latter covers optimality over all convex combinations of the original models, not just the individual models themselves: the Lemma shows additionally that BMA achieves the following minimum
\begin{equation}
\big(p(\mathcal{M}_k|y)\big)_{k=1,2} := {\arg \min}_{\lambda\in[0,1]^2 : \lambda_1 + \lambda_2 = 1} \mathbb{E}[\big(y^*-(\lambda_1 \widehat{y_1^*} + \lambda_2 \hat{y_2^*})\big)^2 | y].
\end{equation}
Hence, the BMA estimator is actually optimal over all convex combinations 
of the individual estimators $\widehat{y_1^*}$ and $\widehat{y_2^*}$.

We can also express the reduction of the PMSE for the BMA estimator, compared to the best (lowest) PMSE among all of the individual models',  as 
\begin{equation} \label{R-bma-m}
r_{BMA}^2 := 1 - \frac{\mathbb{E}[ (\widehat{y^*} - y^* )^2 | y ]}{\min_i\mathbb{E}[ (\widehat{y_i^*} - y^* )^2 | y ]}.
\end{equation}\noindent
In the specific case of two competing models, if we assume for instance that the 'best' model is $\mathcal{M}_2$,
we can obtain an even more explicit expression for $r^2_{BMA}$ which provides the relative gain attained by BMA, namely
\begin{equation} 
r_{BMA}^2 = p(\mathcal{M}_1|y)^2 \frac{(\widehat{y_1^*} - \widehat{y_2^*})^2}{\mathbb{E}[ (\widehat{y_2^*} - y^* )^2 | y]}.
\end{equation}
Below in the Application section, we denote the sample version of the expression in \eqref{R-bma-m} as $\hat{r}_{BMA}^2$, which we will use to evaluate the performance of BMA quantitatively.

To finish this section, we decompose the quantity $\mathbb{E}[ (\widehat{y^*} - y^* )^2 | y]$ 
against the residuals  
$(\widehat{y_i^*} - y^* )$, $i = 1,2$ from each individual model assuming $p_1, p_2 > 0$. This is easily done by symmetrizing formula \eqref{eq-replace} via reintroducing $y^*$ to identify these residuals, and then taking another conditional expectation with respect to $y$ to avoid an expression which depends on unobserved data, which doesn't affect the left hand side. We obtain 
%\begin{equation*}
%\mathbb{E}[(y^*- \widehat{y^*})^2 |y] = \frac{\mathbb{E}[(y^*-  %\widehat{y_1^*})^2 |y] + \mathbb{E}[(y^*-  \widehat{y_2^*})^2 |y]}{2} - %\frac{(p_1^2 + p_2^2) (\widehat{y_1^*} - \widehat{y_2^*})^2}{2},
%\end{equation*}
\begin{equation} \label{mse-interp}
\begin{aligned}
\mathbb{E}[(y^*- \widehat{y^*})^2 |y] & = (p_1 - p_1^2)\mathbb{E}[(y^*-  \widehat{y_1^*})^2 |y] + (p_2 - p_2^2)\mathbb{E}[(y^*-  \widehat{y_2^*})^2 |y]\\
& - (p_1^2 + p_2^2)\mathbb{E}\big[(\widehat{y_1^*}-  y^*)(y^*-  \widehat{y_2^*})|y\big].
\end{aligned}
\end{equation}
Formula \eqref{mse-interp} shows that the PMSE of the BMA estimator is an explicit linear combination of the prediction errors of estimators for each constituent model, those that do not consider model uncertainty, but that one must subtract a coupling correction term on the right hand side of \eqref{mse-interp}. 

It is interesting to note that the weights in the aforementioned linear combination can be interpreted as the variances of Bernoulli random variables with the posterior model probabilities $p_1$ and $p_2$ as their success probabilities. Also note that, since these variances $p_k - p_k^2 < p_k$, the  linear combination is not convex, but is smaller. The correction term is not necessarily a subtraction of a positive term, but it is likely to be so in some generic cases, for example, when both individual models have significant biases in opposite directions for prediction of $y^*$. This is particularly interesting when both models have similar posterior performances. Indeed, then, both values of $p_k$ will be close to $1/2$, which minimizes the values of $p_k - p_k^2 $ for both $k=1,2$. This is a scenario where using the BMA model will significantly improve prediction errors even when each model is competitive compared to the other, regardless of how large the individual models' biases are, and without knowing in what direction they go, as long as the two models are known or assumed to have significant defects that work in opposite directions.  

A sanity check reveals an interesting characteristic of BMA: suppose that $p_1=1$, so that the BMA estimate is given by $\widehat{y^*} = \widehat{y_1^*}$. According to \eqref{mse-interp} we \emph{must} have 
$\mathbb{E}[(y^*- \widehat{y^*})^2 |y]=0$, and further $\mathbb{E}[(y^*- \widehat{y^*})^2]=0$, 
i.e. $y^*= \widehat{y^*} = \widehat{y_1^*}$ a.s. given $y$, 
in other words Model 1 must provide a perfect description of the reality.

The PMSE corresponds to an MSE under the posterior probability. It is of different nature that the MSE one would typically compute when evaluating a model against experimental measurements. Nevertheless it is a good proxy for an actual MSE when posterior predictions are close to actual measurements.

%%%%%%%%%%%%%%%%%%%%%%%%%%%%%%%%%%%%%%%%%%%%%%%%%%%%%%%%%%%%%%%%%%%%%%%%%%%%%%%%%%%%%%%%%%%%%%%%%%%%
\section{Domain-corrected model averaging} \label{sec-domain-correction}
%%%%%%%%%%%%%%%%%%%%%%%%%%%%%%%%%%%%%%%%%%%%%%%%%%%%%%%%%%%%%%%%%%%%%%%%%%%%%%%%%%%%%%%%%%%%%%%%%%%%

\paragraph{Motivation.} 
It is easy to imagine scenarios where alternative models
are defined on different subsets of the same input space; 
this can typically arise with local models 
or with numerical models with different constraints.
It is also a usual situation in nuclear physics, for instance for nuclear mass models; as a fact in point, ab initio (also known as $A$-body) models aggregating individual two- and three-body interactions range over lighter nuclei due to contemporary computational limitations, 
while Energy Density Functionals (EDF), 
which relies on mean field approximations,
can cover the whole nuclear chart.
This even occurs for different EDFs \citep{Klupfel08,Kortelainen10}.
This also happens when one considers mixing models produced by the calibration of a given EDF
independently on observables of different types,
if some parameters are not constrained 
by all observables and a particular model has a wider definition domain for a specific observable - 
typically some nuclear models are mostly fitted on masses, and others on masses and radii.
Surprisingly, we have not found in the literature a principled approach to 
adapt BMA to this situation, or how to compare 
models with similar, overlapping, but significantly non-identical domains. 
To address this "domain discrepancy", we present a method which relaxes the requirement that all models cover the same domain, and we name our procedure \emph{domain-corrected BMA}.
Other applications of our framework could include time series with missing data, or different time scales, e.g. in a financial setting where additionally different classes of assets can be treated as observables.

Let us start by considering two models $\mathcal{M}_A$ and $\mathcal{M}_B$, 
which we will also denote by $(A)$ and $(B)$ or merely $A$ and $B$ for simplicity, and assume that they are respectively defined only on different strict subsets $x^{(A)}$ and $x^{(B)}$ of the data.
We denote $y^{(A)}$ and $y^{(B)}$ the corresponding $y$ data as well as $y^{(-A)}$ and $y^{(-B)}$ their respective complements in $y$.
The actual \emph{Bayesian evidence} for each of these models are the probabilities $p(y|A)$ and $p(y|B)$, but these quantities are not clearly defined.
On the other hand $p(y^{(A)}|A)$ and $p(y^{(B)}|B)$, 
 where each model refers only to its original range of validity,
are the evidences of the models corresponding to 
the classical BMA theory which can be computed as detailed in Section \ref{sub-Bayes-factor}. We have:
 \begin{equation} \label{eq-bma-div}
	p(y|A) 
	 =  p(y^{(A)}, y^{(-A)}|A) 
	 =  p(y^{(A)}|A) p(y^{(-A)}|y^{(A)}, A) .
\end{equation}
This expression means that to obtain model $(A)$'s actual Bayesian evidence,  $p(y^{(A)}|A)$ must be multiplied by a corrective factor $p(y^{(-A)}|y^{(A)}, A)$ 
which represents the information one has on $y^{(-A)}$ assuming that model $(A)$ holds and that it does not provide any prediction at the data points in $x^{(-A)}$. 
Note that the distribution $p(y|A)$ is meaningful only to the extend that $y$ -- and thus $y^{(-A)}$ -- is measurable in the underlying probability space, which implies the existence of  underlying distributions $p(y^{(-A)})$ and subsequently of $p(y^{(-A)}|y^{(A)})$ and $p(y^{(-A)}|y^{(A)}, A)$.
To that extent, the problem of averaging models with different domains can be ill posed, if these distributions cannot be defined convincingly.

If the data $y^{(A)}$ and $y^{(-A)}$ are independent, conditionally to model $(A)$, in other words if no information can be gleaned about $y^{(-A)}$ from $y^{(A)}$ or from $(A)$,
i.e. $y^{(A)}$ is unconstrained by $(A)$ and by  $y^{(-A)}$, then it is legitimate to ignore the aforementioned correction factor which should be $p(y^{(-A)}|y^{(A)}, A) = 1$. 
In particular, this is the case if, 
given model $(A)$,  $y^{(-A)}$ is considered deterministically equal to its sample value.
Conversely, setting the corrective factor to
$p(y^{(-A)}|y^{(A)}, A) = 1$ 
outside of this scope is an approximation to such extend, 
and not in general a fair evaluation of the information contained in the "globality" of a model. We shall pragmatically call this approach \emph{BMA with independent model domains}.
Although it has been adopted 
as a natural matter of convenience,
it raises serious safeguards for which 
we cannot find better words than Trotta's ascertainment:

\begin{quotation}
On the other hand, it is important to notice that the Bayesian evidence does not penalize models with parameters that are unconstrained by the data. It is easy to see that unmeasured parameters (i.e. parameters whose posterior is equal to the prior) do not contribute to the evidence integral, and hence model comparison does not act against them, awaiting better data \citep{Trotta08}. 
\end{quotation}

Let us point out as a caveat the other extreme situation that occurs when model $(A)$ does not predict the values $y^{(-A)}$ to exist,
e.g. the mass of a nucleus $X$ which the said model predicts not to exist hence has no physical meaning.
In this case, the model $(A)$ is actually strongly constrained by $y^{(-A)}$,
to the point that $p(y^{(-A)}|A) = 0$, 
yielding $p(y^{(-A)}|y^{(A)}, A) = 0$,
which rules the model $(A)$ impossible 
as long as $y^{(-A)}$ is not empty.

Another tempting option is to restrict 
the domain of interest to the domain common to all models, and simply consider
$p(y^{(A)\cap (B)}|A)$ and $p(y^{(A)\cap (B)}|B)$, 
which can also be obtained from Section \ref{sub-Bayes-factor}. As we ignore even more data, this approach is arguably worse than
setting $p(y^{(-A)}|y^{(A)}, A) = 1$.

Let us illustrate how the assumption of independent model domains approach, 
namely setting $p(y^{(-A)}|y^{(A)}, A) = 1$,
can fail to provide a satisfactory ranking of models in two examples where a model takes a shortcut by 'refusing' to predict challenging points.

\paragraph{Scenario 1.}
Consider the situation where one model $\mathcal{M}_0$ is empty, in the sense that 
$p(y^{(0)} | \mathcal{M}_0) = p(\emptyset|\mathcal{M}_0) = 1$ 
so that $p(\mathcal{M}_0| y) = \pi(\mathcal{M}_0)$. 
On the other hand, any other model which constrains any part of the data will have an evidence lower than $1$ which implies that the model will end up with lower posterior weights when starting from equal prior weights. Thus any predictive model will be deemed inferior to a non-predictive one. 

\paragraph{Scenario 2.} Take two deterministic models $A$ and $B$ with input space (domain of $x$) $\{a,b\}$; 
assume model $A$ has deviation $0$ at location $a$ and $10^{99}$ at location $b$, and that model $B$ has 
deviation $1.001$ at location $a$, but does not predict anything at location $b$. One can easily adjust the numbers to reach an extreme situation (e.g. making $A$'s prediction at location $b$ to be extremely poor) where model $B$ ends up with a much higher Bayes factor than model $A$, while the common sense idea, by which no prediction is a form of extremely poor prediction, would always imply that model $A$ is better than model $B$.\\

These examples show how important it is to acknowledge that a model's inability to make predictions in some locations is not a neutral property. 
The classical BMA approach offers no trade-off: a model withholding its predictions at the most difficult points will always improve its weight.
We now introduce our ``domain-corrected BMA''
where we amend the model weights to account more fairly
for the (in-)ability of a model to provide predictions at locations of interest.

\paragraph{Domain correction with two models.}

Starting from \eqref{eq-bma-div}, 
instead of setting $p(y^{(-A)}|y^{(A)}, A) = 1$ 
which removes the effect of a model's domain in its posterior weights,
we propose the weaker assumption that $p(y^{(-A)}|y^{(A)}, A)$ is independent from the model, i.e. we assume
$$p(y^{(-A)}|y^{(A)}, A) = p(y^{(-A)}|y^{(A)}).$$
This is quite natural if we consider that model $(A)$ implies a distribution $p(y^{(A)}|A)$ 
but provides no information on  $y^{(-A)}$,
leaving $y^{(-A)}$ unconstrained by $(A)$ (see the introduction of this section).
The evidence $p(y|A)$ is now given by
\begin{equation}
p(y|A) \propto_A p(y^{(A)}|A) p(y^{(-A)}|y^{(A)})\pi(A). \label{eq-master-cor-prop-a}
\end{equation}
Our assumption $y^{(A)}\cup y^{(B)}=y$ implies that $y^{(-A)}$ can only be informed by $(B)$. 
Hence 
\begin{equation}
 p(y^{(-A)}|y^{(A)}) 
= p(y^{(-A)}|y^{(A)}, B) 
= p(y^{(-A)}|y^{(A)\cap(B)}	, B)
\end{equation}
which can be written as an explicit integral with respect to model $(B)$'s parameter $\phi_B$,
\begin{equation}
    \int_{\phi_B} p(y^{(-A)}|B, \phi_B)p(\phi_B|y^{(A)\cap(B)}, B) d\phi_B,
\end{equation}
and computed similarly to a classical evidence integral (see Section \ref{sub-Bayes-factor}).

\paragraph{Domain correction in the general case.}

In the general case, each model $\mathcal{M}_k$ constrains a subset $y^{(k)}$ of the data $y$ (for $k=1,\ldots, K$); 
as in the case of two models, $y^{(-k)}$ denotes the complement subset of $y^{(k)}$ in $y$.
We also introduce $y^{(\oslash)}:=\bigcap_{k}y^{(k)}$ as the set of data 
common to all individual models.
Moreover we assume that $y=\bigcup_{k}y^{(k)}$, i.e. every datapoint is covered by at least one model. 
We also assume, up to taking equivalence classes on models
(see Appendix \ref{annex-mixing-domains} for details),
that for each pair of models there exists a chain of models joining them 
where each model $k$ shares a data point in its domain $y^{(k)}$ with each of its neighbours.
Relying on the same principles, we set 
$$p(y^{(-k)} | y^{(k)}, \mathcal{M}_k) = p(y^{(-k)} | y^{(k)}),$$ 
which leads to the model weights $w_k$ of the form
\begin{equation}
p(\mathcal{M}_k | y) 
\propto_k 
p(y^{(-k)} | y^{(k)})
p(\mathcal{M}_k | y^{(k)}). 
\end{equation}

Compared to the two-model case,
the computation of the corrective factors 
shows in general the additional difficulty that, 
when there is more than one model constraining $y^{(-k)}$, the factor
$p(y^{(-k)}|y^{(k)})$ is no longer equal to a single $p(y^{(-k)}|y^{(k)}, \mathcal{M}_k)$ but rather to
the Bayesian average of all models constraining $y^{(-k)}$. 
Hence our domain-corrected BMA corresponds to the
natural, intermediate solution where one replaces the factors of the likelihood corresponding to the missing model predictions by a geometric average of the likelihoods over the models which do produce predictions, based on the predictive models' posterior weights.
We have found that similar ideas have been developed in the broader framework of evidence theory \cite[Section 2.2]{Park2012}.

The notation for a given corrective factor can become quite cumbersome, or could be ambiguous, when model domains have very general intersections, but these corrective factors can still be computed recursively rather than directly. We relegate the calculations of the general case to the appendix, for the sake of readability.

\paragraph{Discussion.}

While it is more liberal than classical BMA, our procedure should still be considered as conservative, in the sense that it is neutral towards (does not penalize or favor) a model not producing any prediction; indeed, 
while a model 
which does not include an existing data point in its domain would have weight 0 in the naive BMA,
it would still have the same weight as the same model which would predict the average value over the other models
when applying the domain-corrected method. 

Our setting calls for a localization of the Bayesian model averaging and selection procedures.
Let us take now model weights $w_k$ as given, either by our domain corrected BMA or any another method. The  posterior average distribution for a prediction $y^*:=y(x^*)$ at a new point $x^*$ can be calculated from
\begin{equation} \label{eq-bma-diff-dom}
    p(y^*|y) 
    \propto \sum_{k=1}^K w_k p(y^*|y^{(k)},\mathcal{M}_k)
    =
    \sum_{x^*\in \mathcal{D}_k}
    w_k p(y^*|y^{(k)},\mathcal{M}_k)
    + 
    \sum_{x^*\notin \mathcal{D}_k} w_k
    p(y^*|y^{(k)},\mathcal{M}_k).
\end{equation}
In the traditional BMA framework, $x^*$ would naturally belong to the domain $\mathcal{D}_k$ of every model $\mathcal{M}_k$, but this does not hold in our framework.
In the model selection problem, picking the best model with highest weight and using it for predictions at all locations would amount to a likelihood of 0 at all locations outside of the domain of the selected model, which is rather extreme.
This calls for a local variation of model selection. For instance, instead of selecting a single best model, one can rank the models such that $\mathcal{M}_0$ is the \emph{best} model and $\mathcal{M}_K$ is the \emph{worst} model and use the following prediction procedure: at each location $x$, pick as true model the model $\mathcal{M}$ with the smallest $k$ which contains $x$ in its domain. One could easily think of more complex procedures. This also suggest local variations of model mixing.

Additionally, this emphasizes that our procedure questions that the models are mutually exclusive (i.e. that there exist a true model $M$ such that each $\mathcal{M}_k$ is of the form $1_{\{M=k\}}$). 
Specifically, our procedure is incompatible with global exclusivity of models in the classical BMA framework, in the sense that the classical BMA assumes that one of the model is true on the whole domain. In our new procedure, we replaced the model prediction with a ``default'' value
at locations where it would otherwise predicts nothing, given by the Bayesian average prediction. 
Consequently our model averages  predict something in any given location, even when the set of predictive models depends on the location. 
Applying the same principle to the germane problem of model selection would lead to producing 
an "optimal" model replaced by the BMA outside its own domain. 

To be absolutely clear, it is of course always possible to bypass the domain issues by restricting estimations to the data which is common to the domains of all models. This circumvents having to deal with models withholding their predictions, but it can leave significant amounts of data unused. That disadvantage can be particularly stark in nuclear physics, when comparing EDF models with $A$-body models, as we mentioned at the start of this section, since these two model classes have such narrow overlap.

%%%%%%%%%%%%%%%%%%%%%%%%%%%%%%%%%%%%%%%%%%%%%%%%%%%%%%%%%%%%%%%%%%%%%%%%%%%%%%%%%%%%%%%%%%%%%%%%%%%%
\section{Examples and applications}
%%%%%%%%%%%%%%%%%%%%%%%%%%%%%%%%%%%%%%%%%%%%%%%%%%%%%%%%%%%%%%%%%%%%%%%%%%%%%%%%%%%%%%%%%%%%%%%%%%%%

To illustrate the methodology described in the previous sections, we present several examples in which BMA of computer models leads to reduction in prediction error and improved uncertainty quantification. Our first illustration is a simple yet sensible scenario of averaging two different models of proton potentials. 
The second example is an application of the methodology to nuclear mass models and  nuclear mass data. 
Lastly, we provide a pedagogical application of model averaging to a synthetic dataset which highlights the interest of the domain-corrected BMA.
Each of the examples in this section looks at a situation with several competing models without any prior knowledge of which is better; thus we set the prior model weights to be uniform over the model space. All the posterior samples were computed using a Hamiltonian Markov-Chain Monte-Carlo algorithm. The evidence integrals were approximated using the Monte-Carlo integration.

\subsection{Averaging proton potentials}

In this first example we demonstrate the potential of BMA to improve both prediction accuracy and honesty of uncertainty quantification in a favorable situation 
where we average two models 
associated with different proton potentials.

We consider two single-proton potentials describing 
the average interaction acting on a proton
within the spatial range of a nucleus;
namely, a Wood-Saxon (WS) potential $V_{1}$
representing respectively strong nuclear forces between nucleons (protons and neutrons),
and a Coulomb potential $V_{2}$
representing electromagnetic interactions between protons. 
For a given nucleus, 
which we will take with proton and neutron numbers $Z=100$ and $N=150$ and mass number $A = 250$, 
they can be expressed as 
\begin{align}
V_{1}(r) &= - V_{WS} \frac{1}{1 + e^{\frac{r - R_A}{a}}}, \\
V_2(r) &= - {V_C} \frac{Z}{r}.
\end{align}
Here, $V_{WS} = 50$, $V_{C} = 0.5$ and $a = 0.5$ are fixed parameters, and $R_A =  A^{1/3} \times 1.25$ fm is the radius of the nucleus of interest. 
These two models for energy potentials have the interesting property that both are non-decreasing and vanishing at infinity, while with different speeds,
and can correspond to two phenomenons with different length scales. 
As a matter of fact, strong interactions described by the WS potential are confined to atomic nuclei (several fm $=10 ^{-15}$ m), i.e. they are short-ranged; in contrast electrostatic are long-ranged, i.e. they act on much larger length scales ($>10 ^{-10}$ m) and compete with strong interaction in superheavy elements, causing the so-called Coulomb frustration (see \cite{CoulombFrustration}).
This fact is reproduced in our example where we also expect that $V_{1}$ should be well constrained 
by a dataset of stable nuclei, 
while $V_2$ should reproduce better short-lived superheavy nuclei.
More generally, we have in mind a scenario
where two models have been developed for different subsets of an input domain and are in competition on some common intermediate domain. Both of these modeling approaches are equally confident that they prevail on the intermediate domain, while the truth is somewhere in between. This situation is very realistic despite its simplicity, and we can reasonably expect model mixing to have positive outcomes.

We simulate experimental data
$\{(r_i, y_i)\}_{i = 1} ^ {n}$
at different spatial locations $r_i$, relatively far from the nucleus ($r>R_A$)
following a mixture of the two models. Namely 
\begin{equation} \label{def:potentials_true}
y_i = (1-\omega){V_{1}(r_i)} + \omega {V_2(r_i)} + \epsilon_i
\end{equation}
where $\epsilon_i$ are standard normal errors, and we take $\omega=\frac{1}{2}$. 
Note that in reality observations of the potentials are not available as such, but can be inferred indirectly relatively accurately from experimental nucleonic densities measured in nucleon scattering experiments \citep{ANNI1995}.

In particular, we drew a dataset of $210$ observations generated according to the model \eqref{def:potentials_true} with the locations $r_i$ sampled uniformly over $(R_A, 10)$. 
We further randomly divided the data into a training dataset of 140 observations and kept the remaining 70 observations for testing. 
The two statistical models $\mathcal{M}_1$ and $\mathcal{M}_2$ considered here are given by the respective energy potentials \eqref{def:potentials_true} obtained with $\omega=0$ and $\omega=1$ and additive independent experimental errors distributed according to $N(0, \sigma_j)$ for $j = 1,2$.

\begin{figure}[!h]
\begin{floatrow}\CenterFloatBoxes
\ffigbox{%
  \includegraphics[width=1.02\linewidth]{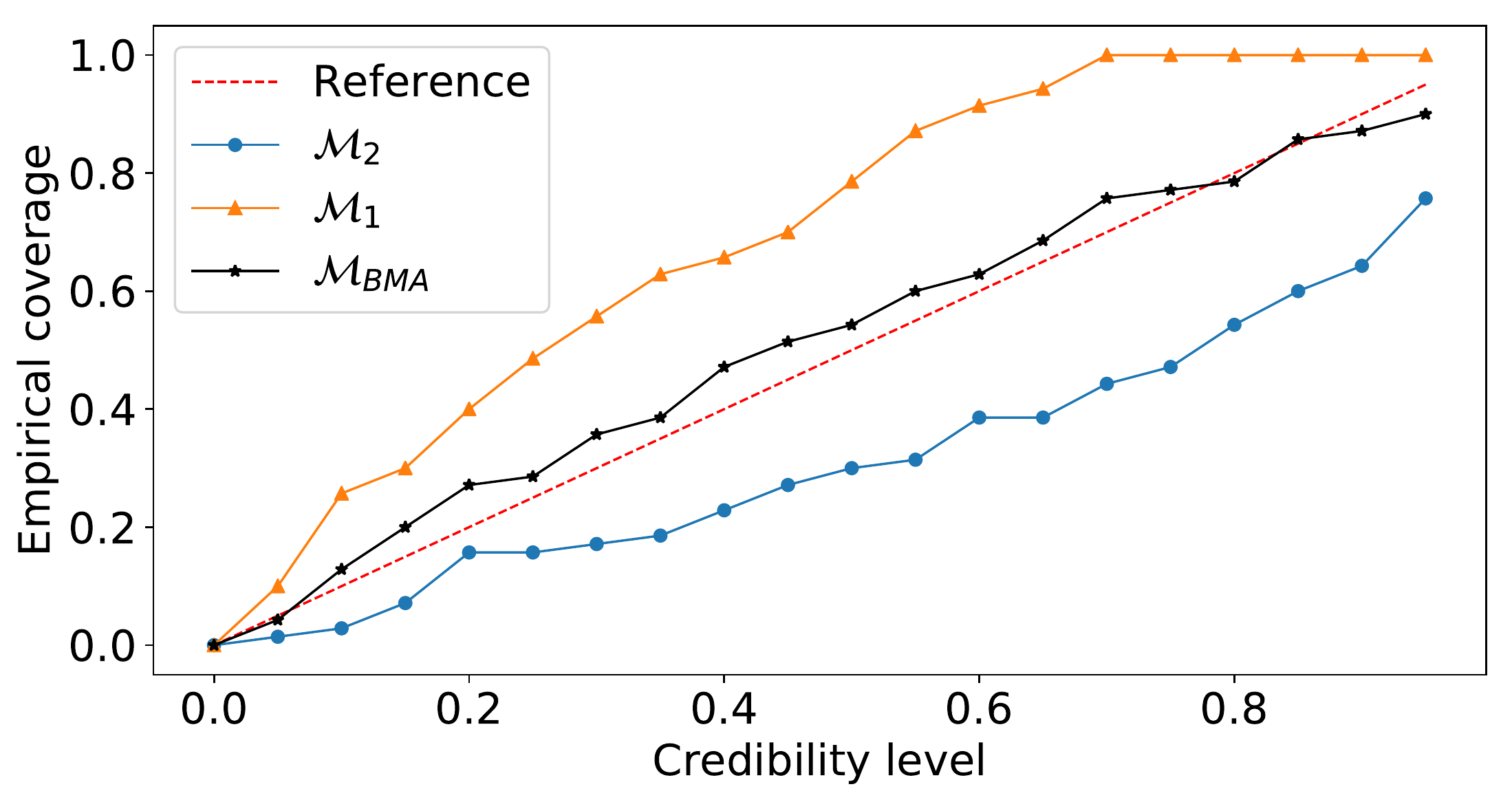}%
}{%
  \caption{Empirical coverage probability for testing dataset ($n=70$, $A = 250$).} \label{fig:ws}%
}
\capbtabbox{%
\setlength{\tabcolsep}{3pt}
\renewcommand{\arraystretch}{0.7}
	\begin{tabular}{lcccc} 
		\hline	\noalign{\smallskip}
		\textbf{Model} & & \boldmath${RMSE}$     & \boldmath$P(\mathcal{M}_k|y)$ & \boldmath$\widehat{r}^2_{BMA}$ \\
		\hline	\noalign{\smallskip}
		$\mathcal{M}_1$   & & $3.540$ & 0.512 &  \boldmath$0.930$    \\
		$\mathcal{M}_2$   & & $3.607$   & 0.488 &   \boldmath$0.933$   \\
		$\mathcal{M}_{BMA}$ &  & $0.935$  &    -    & -   \\
		\hline	\noalign{\smallskip}
	\end{tabular}
}{%
  \caption{RMSE (in MeV) and improvement for the BMA model calculated on the testing dataset ($n=70$, $A = 250$).} \label{table:ws}%
}
\end{floatrow}
\end{figure}

Table \ref{table:ws} shows the estimated residual MSE (RMSE) for the testing dataset.
We can see that this simple example gives significantly better predictions under the BMA posterior mean predictor than for each of the models individually. This is, of course, not a surprise and only shows that the BMA behaves as expected. 
More interesting results can be seen from the angle of the quality of the predictions' UQ. Figure \ref{fig:ws} shows the empirical coverage probabilities (ECP), i.e. the proportion of independent testing points falling into the respective credibility intervals with nominal value given in the horizontal axis - ideally a straight line. 
In contrast with the individual models, the ECP of the BMA posterior predictions matches closely the reference line and provides evidence that accounting for model uncertainty leads to a desired more honest UQ.

%%%%%%%%%%%%%%%%%%%%%%%%%%%%%%%%%%%%%%%%%%%%%%%%%%
\subsection{Averaging nuclear mass emulators in the Ca region} 
%%%%%%%%%%%%%%%%%%%%%%%%%%%%%%%%%%%%%%%%%%%%%%%%%%

An important challenge in nuclear structure is to produce quantified predictions of nuclear observables, such as nuclear masses \citep{McDonnell15}, for all possible pairs $(Z,N)$ of proton number $Z$ and neutron number $N$ which can be bound together in a nucleus. 
Such predictions are of direct interest to guide future nuclear experiments or to feed astrophysical calculations for the abundance of elements in the universe.
The underlying astrophysical processes, such as the rapid neutron capture which produces heavy elements in stellar environments, 
take place far from the region of nuclear stability, where no experimental measurement are available, and these observables have to be extracted from extreme extrapolations of theoretical nuclear models.

In their recent work, \cite{Neufcourt18-2} use Gaussian Processes to model the discrepancies between experimental data and theoretical calculations for several nuclear models based on the density functional theory, and obtain quantified extrapolations for nuclear masses in the Calcium region (at the frontier between experimental and theoretical limits).
They compute a simplified BMA of 9 global mass models \citep{SKMstar,SKP,SLY4,SVMIN,UNEDF0,UNEDF1,UNEDF2} listed in Table \ref{table-rmse-full} defined across the full nuclear landscape from light to superheavy nuclei, thus suitable for extrapolations.  
Their weights, 
formally $w_k\propto p(y^*>0|y,\mathcal{M}_k)$,
are based on each model's probability to assign a positive separation energy $y^*$ to a testing set of nuclei which have been experimentally observed after 2003, 
thus independent from the training set of measured separation energies $y$. 
Here, we compare their results to the full BMA analysis presented in Section \ref{Sec:BACM}
applied to the same framework, this time with weights $w_k\propto p(y|y,\mathcal{M}_k)$. 
Note that all the physical models models are taken here as calibrated and their parameter estimation is not part of our analysis. 

We consider the same training dataset of one-neutron ($S_{1n}$) and two-neutron ($S_{2n}$) separation energies AME2003 \citep{AME2003} restricted to the calcium (Ca) region on the nuclear landscape with $Z\ge 14$ and $N \le 22$ ($n = 139$). The predictive performances of each model augmented with a GP model for systematic discrepancies and the BMA posterior mean predictor are evaluated on both the training dataset and a testing dataset of new measurements in AME2016 ($n = 14$) \citep{AME2016}. Similarly to \cite{Neufcourt18-2} 
we calculate model posterior probabilities independently over four non-overlapping nuclear domains according to the parity of numbers $Z$ and $N$ with uniform prior distribution over the model space. We assess the performance of the BMA using RMSE improvement and empirical coverage probability, aggregated over parities in order to mitigate the relatively small size of each subset. The GP model specification and sample sizes breakdown based on the parity of $Z$ and $N$ are given in Appendix \ref{annex-examples}, and we refer you to \cite[Statistical analysis]{Neufcourt18-2} for more details.

\setlength{\tabcolsep}{4pt}
\renewcommand{\arraystretch}{0.7}
\begin{table}[!h]
\caption{Model posterior weights for the 9 nuclear mass models under consideration with RMSE (in MeV) and $\widehat{r}^2_{BMA}$ values for training (AME2003) and testing (AME2016 $\setminus$ AME2003) datasets. The last three rows correspond respectively to averaging with prior weights, simplified BMA \citep{Neufcourt18-2}, and full BMA.
\label{table-rmse-full}}
\begin{tabular}{l|l|l|l|l|ll|ll}
 \hline \noalign{\smallskip}
\textbf{} & \multicolumn{4}{c|}{\textbf{Model posterior weights}} & \multicolumn{4}{c}{\textbf{Errors}} \\ \noalign{\smallskip} \hline \noalign{\smallskip}
\textbf{} & \multicolumn{2}{c|}{\boldmath$S_{1n}$ (\textbf{odd N})} & \multicolumn{2}{c|}{\boldmath$S_{2n}$ (\textbf{even N})} & \multicolumn{2}{c|}{\textbf{Training}} & \multicolumn{2}{c}{\textbf{Testing}} \\\noalign{\smallskip} \hline \noalign{\smallskip}
\textbf{Model} & \textbf{even Z} & \textbf{odd Z} & \textbf{even Z} & \textbf{odd Z}  & \textbf{RMSE} & \boldmath$\widehat{r}^2_{BMA}$ & \textbf{RMSE} & \boldmath$\widehat{r}^2_{BMA}$ \\
\noalign{\smallskip} \hline \noalign{\smallskip}
\textbf{SLy4} & 0.000 & 0.000 & 0.000 & 0.008 & 0.076 & - &  0.713 & 0.313 \\ 
\textbf{SkP} & 0.000 & 0.000 & 0.000 & 0.000 & 0.127 & 0.308 & 0.989 & 0.642 \\
\textbf{SkM*} & 0.000 & 0.000 & 0.000 & 0.000 & 0.142 & 0.449 & 0.924 & 0.591 \\
\textbf{SV-min} & 0.000 & 0.000 & 0.000 & 0.001 & 0.107 & 0.023 & 0.840 & 0.505\\
\textbf{UNEDF0} & 0.000 & 0.009 & 0.000 & 0.000 & 0.136 & 0.400 & 0.809 & 0.466\\
\textbf{UNEDF1} & 0.845 & 0.669 & 0.000 & 0.089 & 0.110 & 0.077 & 0.550 &  -\\
\textbf{UNEDF2} & 0.002 & 0.013 & 0.000 & 0.125 & 0.109 & 0.058 & 0.806 & 0.462 \\
 \textbf{FRDM-2012} & 0.153 & 0.308 & 0.902 & 0.310  & 0.114 & 0.149 & 0.808 & 0.465 \\
 \textbf{HFB-24} & 0.000 & 0.001 & 0.098 & 0.467 & 0.146 & 0.477 & 0.806 & 0.463 \\
 \noalign{\smallskip} \hline \noalign{\smallskip}
$\mathcal{M}_{BMA(prior)}$ & & & & & 0.110 & 0.045 & 0.641 & 0.078 \\ 
$\mathcal{M}_{BMA(simple)}$ & & & & & 0.118 & 0.110 & 0.680 & 0.131 \\ 
$\mathcal{M}_{BMA}$ & & & & & 0.105 & - & 0.591 & - \\ \noalign{\smallskip} \hline \noalign{\smallskip}
\end{tabular}
\end{table}

Table \ref{table-rmse-full} presents the resulting posterior weights of the models, as well as RMSE and MSE improvement for both averaging procedures. 
The predictions based on the full BMA ($\mathcal{M}_{BMA}$) outperform the simplified method of \cite{Neufcourt18-2} ($\mathcal{M}_{BMA(simple)}$) by $11\%$ on the training dataset and $13 \%$ on the testing one, as measured by $\widehat{r}^2_{BMA}$.
The lowest RMSE on the training dataset was attained by SLy4 and UNEDF1 respectively for AME2016 $\setminus$ AME2003. This result should not discourage practitioner from using BMA posterior mean predictors, because the BMA methodology outlined in this paper allows for existence of a "best" model for a particular data domain. However, such a model does not account for modeling uncertainty whereas BMA does, and therefore the BMA posterior mean estimator performs consistently well irrespective of the dataset. In fact it attains the second lowest RMSE on both AME2003 and AME2016 $\setminus$ AME2003. Moreover, if we consider only a subset of the whole model space, the BMA attains the lowest root RMSE. See Table \ref{table-rmse-sub} in Appendix \ref{annex-examples} for the results with restricted model space. 
Fig. 7 in Appendix \ref{annex-examples} shows the ECP of the averaged nuclear mass emulators.
While it is not clear that the BMA has an improved ECP compared to each individual models, its ECP is certainly significantly better than the worst models and comparable to the models with highest fidelity.

%%%%%%%%%%%%%%%%%%%%%%%%%%%%%%%%%%%%%%%%%%%%%%%%%%
\subsection{Domain-corrected averaging: a pedagogical example} \label{sub-pedagogical}
%%%%%%%%%%%%%%%%%%%%%%%%%%%%%%%%%%%%%%%%%%%%%%%%%%

In this example we study a simulated scenario 
where two computationally expensive models with $x$-dynamics of the same order act in opposite directions.
We emulate the computer models as Gaussian processes with means determined by their first order Taylor expansions as in our \emph{rapid calibration via linearization} in Section \ref{Sec:BACM}, in the reasonable situation where numerical values are also available for the model gradients. 
For simplicity we ignore here the systematic discrepancy between the computer model and the physical model.

We consider a synthetic dataset $y$ of 18 observations 
drawn independently from a normal distribution with mean  $0$ and standard deviation $10^{-3}$ at input points $x = \{\pm k, k= 1, 2, \ldots 9\}$.
We denote $f_1$ and $f_2$  the approximate response function of the two computer models obtained from first order Taylor expansions as
\begin{align} \label{Ideal-model}
f_i(x,\theta_i) &\approx \alpha_i x^2 + \theta_i, \hspace{1cm} i \in \{1,2\},
\end{align}
where $\alpha_1 = 0.5$ and $\alpha_2 = -0.5 $  and $\theta_1$ and $\theta_2$ represent unknown parameters to be calibrated. 
The two expansions (\ref{Ideal-model}) emulate a natural scenario of competition between models,
similar to the proton potential example above, where we are uncertain about the nature of the physical law and resort to model mixing in order to account for this uncertainty.
 Additionally, we study the impact of the domain correction by assigning a different training dataset $y^{(k)}$ to the models $\mathcal{M}_1$ and $\mathcal{M}_2$,
 using seven different scenarios with proportions of shared observations ($D_{shared}$) ranging from $20\%$ to $80\%$ according to the scheme in Table \ref{Ideal-scheme}. Note the break of symmetry in the domain of $y^{(k)}$ denoted by a circle, we shall refer to those accordingly as symmetric and asymmetric scenarios.
 
{\renewcommand{\arraystretch}{0.1}
\begin{table}[!htbp]
	\centering
	\caption{Scheme depicting the observations contained in the training dataset $y^{(k)}$ of the models $\mathcal{M}_k$ used in the pedagogical example of Section \ref{sub-pedagogical} according to the proportion $D_{shared}$ of shared data. The crosses mark the values contained in the domain of each model.}
	\tabcolsep=0.15cm
	\begin{tabular}{c|ccccccccccccccccccc}
		\hline	\noalign{\smallskip}
		\multicolumn{1}{l}{} &  & \multicolumn{18}{c}{\textbf{Training dataset }\boldmath$y^{(k)}$} \\
				\noalign{\smallskip}
		\hline	\noalign{\smallskip}
		\multicolumn{1}{l}{\boldmath$D_{shared}$} & \textbf{Model} & -9 & -8 & -7 & -6 & -5 & -4 & -3 & -2 & -1 & 1 & 2 & 3 & 4 & 5 & 6 & 7 & 8 & 9 \\
				\noalign{\smallskip}
		\hline	\noalign{\smallskip}
		\multirow{2}{*}{0.2} & $\mathcal{M}_{1}$ & x & x & x & x & x & x & x & x & x & x &  &  &  &  &  &  &  &  \\
		& $\mathcal{M}_{2}$ &  &  &  &  &  &  &  &  & x & x & x & x & x & x & x & x & x & x \\
				\noalign{\smallskip}
		\hline	\noalign{\smallskip}
		\multirow{2}{*}{0.3} & $\mathcal{M}_{1}$ & $\otimes$ & $\otimes$ & $\otimes$ & $\otimes$ & $\otimes$ & $\otimes$ & $\otimes$ & $\otimes$ & $\otimes$ & $\otimes$ &  &  &  &  &  &  &  &  \\
		& $\mathcal{M}_{2}$ &  &  &  &  &  &  &  & $\otimes$ & $\otimes$ & $\otimes$ & $\otimes$ & $\otimes$ & $\otimes$ & $\otimes$ & $\otimes$ & $\otimes$ & $\otimes$ &  \\
				\noalign{\smallskip}
		\hline	\noalign{\smallskip}
		\multirow{2}{*}{0.4} & $\mathcal{M}_{1}$ &  & x & x & x & x & x & x & x & x & x & x &  &  &  &  &  &  &  \\
		& $\mathcal{M}_{2}$ &  &  &  &  &  &  &  & x & x & x & x & x & x & x & x & x & x &  \\
				\noalign{\smallskip}
		\hline	\noalign{\smallskip}
		\multirow{2}{*}{0.5} & $\mathcal{M}_{1}$ &  & $\otimes$ & $\otimes$ & $\otimes$ & $\otimes$ & $\otimes$ & $\otimes$ & $\otimes$ & $\otimes$ & $\otimes$ & $\otimes$ &  &  &  &  &  &  & \\
		& $\mathcal{M}_{2}$ &  &  &  &  &  &  & $\otimes$ & $\otimes$ & $\otimes$ & $\otimes$ & $\otimes$ & $\otimes$ & $\otimes$ & $\otimes$ & $\otimes$ & $\otimes$ &  & \\
				\noalign{\smallskip}
		\hline	\noalign{\smallskip}
		\multirow{2}{*}{0.6} & $\mathcal{M}_{1}$ &  &  & x & x & x & x & x & x & x & x & x & x &  &  &  &  &  &  \\
		& $\mathcal{M}_{2}$ &  &  &  &  &  &  & x & x & x & x & x & x & x & x & x & x &  &  \\
				\noalign{\smallskip}
		\hline	\noalign{\smallskip}
		\multirow{2}{*}{0.7} & $\mathcal{M}_{1}$&  &  & $\otimes$ & $\otimes$ & $\otimes$ & $\otimes$ & $\otimes$ & $\otimes$ & $\otimes$ & $\otimes$ & $\otimes$ & $\otimes$ &  &  &  &  &  & \\
		& $\mathcal{M}_{2}$ &  &  &  &  &  & $\otimes$ & $\otimes$ & $\otimes$ & $\otimes$ & $\otimes$ & $\otimes$ & $\otimes$ & $\otimes$ & $\otimes$ & $\otimes$ &  &  & \\
				\noalign{\smallskip}
		\hline	\noalign{\smallskip}
		\multirow{2}{*}{0.8} & $\mathcal{M}_{1}$ &  &  &  & x & x & x & x & x & x & x & x & x & x &  &  &  &  &  \\
		& $\mathcal{M}_{2}$ &  &  &  &  &  & x & x & x & x & x & x & x & x & x & x &  &  &  \\
				\noalign{\smallskip}
		\hline	\noalign{\smallskip}
	\end{tabular}
	\label{Ideal-scheme}
\end{table}
}

For each value of $D_{shared}$,
we carried out the domain-corrected procedure detailed in Section \ref{sec-domain-correction} and computed the evidence integrals $p(y^{(k)}|\mathcal{M}_k)$
as well as the corrective terms $p(y^{(-k)}|y^{(k)})$.
Also note that the approximate computation of these terms is more demanding than the computation of the evidence integrals \eqref{eq-I}, because it requires integration against a posterior distribution of parameters. 

{\renewcommand{\arraystretch}{0.5}
\begin{table}[!h] 
	\centering
	\begin{tabular}{c|llllccc}
		\noalign{\smallskip}
		\hline	\noalign{\smallskip}
			\multicolumn{1}{l}{\boldmath$D_{shared}$} & \textbf{Model} & \boldmath${RMSE}$ & \boldmath$p(y^{(k)} | \mathcal{M}_k)$ & \boldmath$p(y^{(-k)}|y^{(k)})$ &
			\boldmath$Q_{0}$ &
			\boldmath$Q$ & \multicolumn{1}{l}{\boldmath$\widehat{r}^2_{BMA}$} \\
			\noalign{\smallskip}
			\hline	\noalign{\smallskip}
			\multirow{3}{*}{0.3} & $\mathcal{M}_{1}$ & 4.69 & $2.69 \cdot 10^{-21}$ & $2.13 \cdot 10^{-16}$ & \multirow{3}{*}{0.03} & \multirow{3}{*}{0.80} &\boldmath$0.495$\\
			& $\mathcal{M}_{2}$ & 4.68 & $7.78 \cdot 10^{-20}$ & $9.25 \cdot 10^{-18}$ & & & \boldmath$0.494$  \\
			& $\mathcal{M}_{BMA(Q_0)}$ & 4.53 & - & - & & & -   \\
			& $\mathcal{M}_{BMA(Q)}$ & 3.33 & - & - & & & -   \\
			\noalign{\smallskip}
			\hline	\noalign{\smallskip}
			\multirow{3}{*}{0.5} & $\mathcal{M}_{1}$ & 4.63 & $7.79 \cdot 10^{-20}$ & $5.44 \cdot 10^{-13}$ & \multirow{3}{*}{0.02} & \multirow{3}{*}{0.61} & \boldmath$0.512$ \\
			& $\mathcal{M}_{2}$ & 4.38 & $3.29 \cdot 10^{-18}$ & $2.12 \cdot 10^{-14}$ & & & \boldmath$0.456$ \\
			& $\mathcal{M}_{BMA(Q_0)}$ & 4.29 & - & - & & & -   \\
			& $\mathcal{M}_{BMA(Q)}$ & 3.23 & - & - & & & -   \\
			\noalign{\smallskip}
			\hline	\noalign{\smallskip}
			\multirow{3}{*}{0.7} & $\mathcal{M}_{1}$ & 4.36 & $3.23 \cdot 10^{-18}$ & $1.13 \cdot 10^{-8}$ & 
			\multirow{3}{*}{0.02} &
			\multirow{3}{*}{0.72} & \boldmath$0.593$ \\
			& $\mathcal{M}_{2}$ & 3.62 & $1.45 \cdot 10^{-16}$ & $3.49 \cdot 10^{-10}$ & & & \boldmath$0.410$   \\
			& $\mathcal{M}_{BMA(Q_0)}$ & 3.54 & - & - & & & -   \\
			& $\mathcal{M}_{BMA(Q)}$ & 2.78 & - & - & & & -   \\
			\noalign{\smallskip}
			\hline	\noalign{\smallskip}
	\end{tabular}
	\caption{
	Summary of the domain corrected BMA
	analysis in the asymmetric case of the pedagogical example. 
	The RMSE (in MeV) was calculated based the set of common observations ($x \leq 5$).
	$BMA(Q)$ and $BMA(Q_0)$ represent respectively domain corrected BMA and BMA with independent model domains.
	$Q$ denotes the posterior odds ratio $p(y^{(-1)}|y^{(1)})p(\mathcal{M}_1| y^{(1)})/ [p(y^{(-2)}|y^{(2)})p(\mathcal{M}_2| y^{(2)})]$ used to draw samples from the mixture distribution \eqref{eq-bma-diff-dom}
	and 
	$Q_{0}$ is the ratio $p(\mathcal{M}_1| y^{(1)})/ p(\mathcal{M}_2| y^{(2)})$. The MSE improvement $\widehat{r}^2_{BMA}$ is w.r.t. BMA with domain correction.
	\label{Ideal-mod-diff-dom-main}}
\end{table}
}

Table \ref{Ideal-mod-diff-dom-main} gives a quantitative summary of the simulation results
in the asymmetric scenario, where the impact of the domain correction is stronger. See Table \ref{Ideal-mod-diff-dom-symm} in the Appendix for the symmetric case, where the impact of the domain correction is minor due to the symmetry of training data and the response functions.
As expected from our construction, BMA leads to a spectacular decrease of the square errors by about $50\%$.
The BMA posterior mean estimator outperforms consistently the estimators from individual models, at all proportions of shared training data. As the overlap between the two model domains increases, all RMSEs consistently decrease. The same observations hold in the symmetric case.

Domain corrected BMA ($BMA(Q)$) has consistently lower RMSE than BMA with independent model domains ($BMA(Q_0)$)
across $D_{shared}$.
We observe that the values of the corrective factors increase exponentially towards 1 as $D_{shared}$ increases; indeed the extreme case $D_{shared} = 1$, where both models are defined on the same domain, corresponds to the classical BMA framework where no correction is needed. 
The odds ratios stay expectedly close to 1, 
due to the fact that the deviations from out-of-domain data are comparable across the models under consideration; still the domain-corrected odds ratio $Q$ has a consistently larger variability than $Q_{0}$, 
the difference vanishing as the proportion $D_{shared}$ of data shared between the two models increases.

\begin{figure}[H] 
	\begin{centering}
		\includegraphics[width=.7\textwidth]{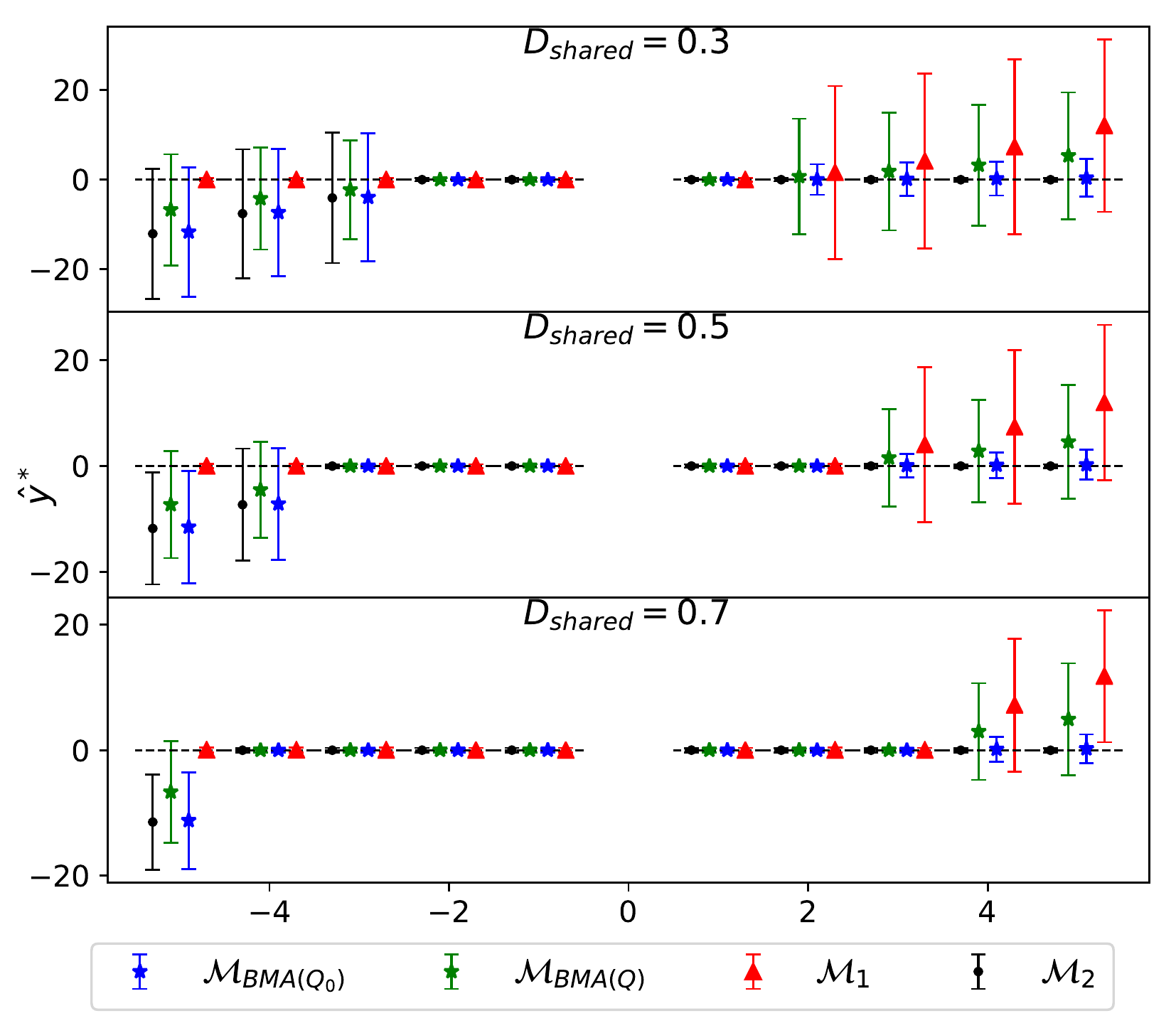}
		\caption{Posterior mean predictions (with 1-sigma error bars) for the 10 observations $y$ 
		for the two models in \eqref{Ideal-model} as well as their BMA, with domain correction ($BMA(Q)$) and with the assumption of independent model domains ($BMA(Q_0)$). 
		The dashed line segments represent the translated values of the original observations.}
		\label{fig:Idealized-mean-ratio}
	\end{centering}
\end{figure}

%%%%%%%%%%%%%%%%%%%%%%%%%%%%%%%%%%%%%%%%%%%%%%%%%%%%%%%%%%%%%%%%%%%%%%%%%%%%%%%%%%%%%%%%%%%%%%%%%%%%
\section{Conclusion}
%%%%%%%%%%%%%%%%%%%%%%%%%%%%%%%%%%%%%%%%%%%%%%%%%%%%%%%%%%%%%%%%%%%%%%%%%%%%%%%%%%%%%%%%%%%%%%%%%%%%

Motivated by nuclear physics research problems, we analyzed the Bayesian Model Averaging setup - the natural Bayesian framework to infer any unknown quantity of interest when several models are competing. 
We focused on the specific challenges arising from BMA of computer models such as the calibration of both computationally inexpensive and expensive models as well as the computation of the evidence integral in this context.
The nuclear physics perspective led us to study the special case of averaging models with different domains, which has not been thoroughly explored in the literature to our knowledge. 
We also gave a theoretical justification for the use of BMA posterior mean predictor in terms of PMSE reduction. While this predictor does not guarantee a universal improvement in predictive ability, on average, it performs at least as well as the best models under consideration.
Finally, we applied the methodology outlined in this paper under several scenarios that lead to considerable MSE reduction; One simple and transparent exercise of averaging of proton potentials and a pedagogical example of domain-corrected averaging with synthetic dataset. We hope that these illustrative examples provide insight into the benefits of BMA and serve as a "proof of concept." We also provided a full-scale BMA analysis of 9 state-of-the-art nuclear mass models. 

Fully documented Python code that reproduces all the examples in this paper is available at \url{https://github.com/kejzlarv/BA_of_computer_models} and can be easily modified to serve practitioners.

There are several opportunities to further explore BMA of computer models, within and beyond nuclear physics. As any other Bayesian method, it hinges on efficient sampling from posterior distributions. Direct sampling from the BMA posterior distribution could significantly improve the ease of implementation. While our theoretical basis for BMA comes from potential reduction of PMSE, a more universal argument could seek consistency of an estimator of $\Delta$ based on its BMA posteriors. Finally our domain correction to BMA corresponds to an elementary and constrained localization. Developing a more elaborated local BMA procedure could answer a wider range of practical challenges in model mixing.

\paragraph{Acknowledgement.} The authors are grateful to Witold Nazarewicz for insightful discussions on model averaging problems in nuclear physics as well as valuable comments on early versions of this work.

%%%%%%%%%%%%%%%%%%%%%%%%%%%%%%%%%%%%%%%%%%%%%%%%%%%%%%%%%%%%%%%%%%%%%%%%%%%%%%%%%%%%%%%%%%%%%%%%%%%%

%%%%%%%%%%%%%%%%%%%%%%%%%%%%%%%%%%%%%%%%%%%%%%%%%%%%%%%%%%%%%%%%%%%%%%%%%%%%%%%%%%%%%%%%%%%%%%%%%%%%

%%%%%%%%%%%%%%%%%%%%%%%%%%%%%%%%%%%%%%%%%%%%%%%%%%%%%%%%%%%%%%%%%%%%%%%%%%%%%%%%%%%%%%%%%%%%%%%%%%%%
\newpage
\begin{appendices}
%%%%%%%%%%%%%%%%%%%%%%%%%%%%%%%%%%%%%%%%%%%%%%%%%%%%%%%%%%%%%%%%%%%%%%%%%%%%%%%%%%%%%%%%%%%%%%%%%%%%

%%%%%%%%%%%%%%%%%%%%%%%%%%%%%%%%%%%%%%%%%%%%%%%%%%%%%%%%%%%%%%%%%%%%%%%%%%%%%%%%%%%%%%%%%%%%%%%%%%%%
\section{BMA and prediction error lemma}
%%%%%%%%%%%%%%%%%%%%%%%%%%%%%%%%%%%%%%%%%%%%%%%%%%%%%%%%%%%%%%%%%%%%%%%%%%%%%%%%%%%%%%%%%%%%%%%%%%%%

\begin{lemma*} 
%\label{lemma-factor} 
For every $\lambda_1, \lambda_2\geq0$ satisfying $\lambda_1 + \lambda_2 = 1$, we have
\begin{equation}
\small
\mathbb{E}[(y^*- \widehat{y^*})^2 |y] 
= 
\mathbb{E}[(y^*- \lambda_1 \widehat{y_1^*} - \lambda_2 \widehat{y_2^*})^2 |y]
- 
[ (\lambda_1-p_1) \widehat{y_1^*} - (\lambda_2-p_2) \widehat{y_2^*} ]^2.
\end{equation}
\end{lemma*}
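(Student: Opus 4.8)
The plan is to exploit the fact that, conditionally on the data $y$, every quantity appearing on either side except $y^*$ is deterministic: the posterior means $\widehat{y_1^*},\widehat{y_2^*}$, the weights $p_1,p_2$, and therefore the BMA predictor $\widehat{y^*}=p_1\widehat{y_1^*}+p_2\widehat{y_2^*}$ are all fixed numbers once $y$ is given. The one input I would borrow from earlier in the paper is the BMA mean identity~\eqref{BMA-mean}, which for $\Delta=y^*$ states $\mathbb{E}[y^*\mid y]=p_1\widehat{y_1^*}+p_2\widehat{y_2^*}=\widehat{y^*}$; that is, the BMA predictor is precisely the conditional mean of $y^*$ given $y$, and not merely one convex combination among many.

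With this in hand the lemma reduces to the parallel-axis (bias--variance) decomposition under the conditional law of $y^*$. Writing $\nu:=\lambda_1\widehat{y_1^*}+\lambda_2\widehat{y_2^*}$ for the arbitrary convex combination, I would expand $\mathbb{E}[(y^*-\nu)^2\mid y]$ by inserting $\pm\widehat{y^*}$ and observe that the cross term $2\,(\widehat{y^*}-\nu)\,\mathbb{E}[\,y^*-\widehat{y^*}\mid y\,]$ vanishes, precisely because $\mathbb{E}[y^*\mid y]=\widehat{y^*}$. This yields $\mathbb{E}[(y^*-\nu)^2\mid y]=\mathbb{E}[(y^*-\widehat{y^*})^2\mid y]+(\widehat{y^*}-\nu)^2$, and rearranging gives the announced identity up to identifying the final square.

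The remaining step is purely algebraic. I would write $\widehat{y^*}-\nu=(p_1-\lambda_1)\widehat{y_1^*}+(p_2-\lambda_2)\widehat{y_2^*}$ and use the normalization constraints $\lambda_1+\lambda_2=1$ and $p_1+p_2=1$ to recast this difference in the displayed form of the correction term. I do not expect any genuine analytic difficulty here; the only point demanding care is this bookkeeping with the two sum-to-one constraints, since $\lambda_2-p_2=-(\lambda_1-p_1)$ lets one collapse the correction to a single multiple of $\widehat{y_1^*}-\widehat{y_2^*}$, and the signs must be tracked consistently so that the square matches the statement exactly. The whole argument is then a one-line conditional-variance computation whose entire force comes from the fact that $\widehat{y^*}=\mathbb{E}[y^*\mid y]$, which also makes transparent the concluding remark that the correction vanishes---and equality with the individual-model PMSE is attained---exactly when $\lambda_k=p_k$.
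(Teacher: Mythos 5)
Your argument is correct and is essentially the paper's own proof: the appendix factors $(y^*-\widehat{y^*})^2-(y^*-\lambda_1\widehat{y_1^*}-\lambda_2\widehat{y_2^*})^2$ as a difference of squares and takes conditional expectation, which is exactly your parallel-axis decomposition, with all the force coming from $\mathbb{E}[y^*\mid y]=\widehat{y^*}$ killing the cross term. One remark on the bookkeeping you were worried about: your (correct) correction term $(\lambda_1-p_1)^2(\widehat{y_1^*}-\widehat{y_2^*})^2$ agrees with the paper's appendix factorization $[(\lambda_1-p_1)\widehat{y_1^*}+(\lambda_2-p_2)\widehat{y_2^*}]^2$ and with the downstream display \eqref{eq-replace}, but not with the lemma as literally printed --- the minus sign in front of $(\lambda_2-p_2)\widehat{y_2^*}$ there is a typo, since taken at face value it would collapse to a multiple of $(\widehat{y_1^*}+\widehat{y_2^*})^2$ rather than of $(\widehat{y_1^*}-\widehat{y_2^*})^2$.
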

\begin{proof}
This follows from taking conditional expectation in the following expression, derived from standard factorization identities, and notice that the right hand side is $y$-measurable :
\begin{center}
$(y^*-\widehat{y^*})^2 - (y^*- \lambda_1 \widehat{y_1^*} - \lambda_2 \widehat{y_2^*})^2$
$= [2y^* - (\lambda_1+p_1) \widehat{y_1^*} - (\lambda_2+p_2) \widehat{y_2^*}]
[(\lambda_1-p_1) \widehat{y_1^*} + (\lambda_2-p_2) \widehat{y_2^*}]$.
\end{center}
\end{proof}

%%%%%%%%%%%%%%%%%%%%%%%%%%%%%%%%%%%%%%%%%%%%%%%%%%%%%%%%%%%%%%%%%%%%%%%%%%%%%%%%%%%%%%%%%%%%%%%%%%%%
\section{Averaging models with different domains: the general case}\label{annex-mixing-domains}
%%%%%%%%%%%%%%%%%%%%%%%%%%%%%%%%%%%%%%%%%%%%%%%%%%%%%%%%%%%%%%%%%%%%%%%%%%%%%%%%%%%%%%%%%%%%%%%%%%%%

Suppose given data $(x_i,y_i)_{i=1}^n$ and $K$ models $(\mathcal{M}_k)_{k=1}^K$, and assume that each model $\mathcal{M}_k$ is defined on a subset $x^{(k)}$ of $x$. 
Denote also $y^{(k)}$ the subset of $y$ corresponding to $x^{(k)}$, $y^{(-k)}$ the complementary subset as well as $y^{(\oslash)}:=\bigcap_{k}y^{(k)}$. Suppose naturally that all data locations are in the domain of at least one model so that $y=\bigcup_{k}y^{(k)}$. 
This situation arises naturally in nuclear physics, where $y_i$ is the value of a given observable (e.g. energy) corresponding to a nuclear configuration $x_i:=(Z_i,N_i)$ and where commonly used models are defined on a restricted nuclear domain where specific physical interactions prevail. \\

Note that if the datasets are disjoint, there is simply no basis to compare the models.
Given a set of models, the reader can easily get convinced that one can define a unique \emph{minimal} equivalence relationship $\star$ on the models (i.e. with a number of equivalence classes maximal) 
satisfying $\mathcal{M}\star\mathcal{M}'$ if $\mathcal{M}$ and $\mathcal{M}'$ 
share at least one data point, i.e. $\mathcal{M} \star \mathcal{M}'$ if and only if there exists $r\geq 0$ and a sequence of models 
$\mathcal{M}=:\mathcal{M}_1,\mathcal{M}_2,\ldots,\mathcal{M}_r := \mathcal{M}'$ such that $\mathcal{M}_i$ and $\mathcal{M}_{i+1}$ have a common data point for each $0\leq i < r$.
The computation of the posterior weights of the models 
can then be done within each class of equivalence, and we will therefore assume that there is only one such equivalence class. \\

In 'classical' Bayesian model averaging where all models share the same domain $y$ one can express the posterior probabilities on the models $p(\mathcal{M}_k|y)$ using Bayes formula
\begin{equation} \label{eq-master-k}
p(\mathcal{M}_k|y) \propto_k p(y|\mathcal{M}_k)\pi(\mathcal{M}_k) 
\end{equation}
and estimates the evidence integral (Bayes factor) $p(y|\mathcal{M}_k)$ as detailed in Section \ref{sub-Bayes-factor}.
In our situation, however, the model $\mathcal{M}_k$ provides an expression $p(y^{(k)}|\mathcal{M}_k)$ instead of $p(y|\mathcal{M}_k, \phi_k)$, so that the standard procedure cannot be applied without a further argument. \\

Starting from \eqref{eq-master-k}, we expand $p(y|\mathcal{M}_k)$ similarly to the two-model case as
\begin{eqnarray*}
	p(y|\mathcal{M}_k)
	& = & p(y^{(k)}, y^{(-k)}|\mathcal{M}_k) \\
	& = & p(y^{(-k)}|y^{(k)}, \mathcal{M}_k) p(y^{(k)}|\mathcal{M}_k).
\end{eqnarray*}
Instead of setting $p(y^{(-k)}|y^{(k)}, \mathcal{M}_k) = 1$ which unsatisfactorily advantages models which withhold' their predictions at difficult locations (see the example scenarios and discussion in Section \ref{sec-domain-correction}) our domain-corrected model averaging estimates
$$p(y^{(-k)}|y^{(k)}, \mathcal{M}_k) = p(y^{(-k)}|y^{(k)}).$$
This yields evidence and posterior weights given respectively by
\begin{eqnarray}
p(y|\mathcal{M}_k)
& = &
p(y^{(-k)}|y^{(k)}) p(y^{(k)}|\mathcal{M}_k) \\
p(\mathcal{M}_k|y) 
& \propto_k & 
p(y^{(-k)}|y^{(k)}) p(y^{(k)}|\mathcal{M}_k)\pi(M_k),
\end{eqnarray}
similarly to the two-model case. All that is left now is to evaluate the \emph{corrective likelihood} $p(y^{(-k)}|y^{(k)})$. \\

Letting $\mathcal{S}$ be the subset of models constraining $y^{(-k)}$ 
we can compute the \emph{corrective likelihood} $p(y^{(-k)}|y^{(k)})$ by conditioning with respect to the model:
\begin{eqnarray*}
	p(y^{(-k)}|y^{(k)}) 
	& = & \sum_{l=1}^p p(y^{(-k)}|y^{(k)}, \mathcal{M}_l)\pi(M_l) \\
	& = & \sum_{l\in\mathcal{S}} p(y^{(-k)}|y^{(k)}, \mathcal{M}_l)\pi(\mathcal{M}_l) + \sum_{l\notin\mathcal{S}}p(y^{(-k)}|y^{(k)})\pi(\mathcal{M}_l)\\
	& = & \frac{1}
	{\sum_{l\in\mathcal{S}}\pi(\mathcal{M}_l)}
	\sum_{l\in\mathcal{S}} p(y^{(-k)}|y^{(k)}, \mathcal{M}_l)\pi(\mathcal{M}_l)
\end{eqnarray*}

\noindent
$(a)$ The simple case is when $y^{(-k)}$ is \emph{non-divisible}, in the sense that for every $l$ we have $y^{(-k)}\subset y^{(l)}$ or$y^{(-k)}\cap y^{(l)} = \emptyset$: in that case, $p(y^{(-k)}|y^{(k)}, \mathcal{M}_l)$ and the sum above have explicit expressions.\\

\noindent
$(b)$ In the general case some models may be defined only on a strict subset of $y^{(-k)}$. 
In that case we have
\begin{eqnarray*}
	p(y^{(-k)}|y^{(k)}, \mathcal{M}_l) 
	& = & p(y^{(-k)\cap(l)}, y^{(-k)\cap(-l)}|y^{(k)}, \mathcal{M}_l) \\
	& = & p(y^{(-k)\cap(l)}|y^{(k)}, \mathcal{M}_l) p(y^{(-k)\cap(-l)}|y^{(k)}, y^{(-k)\cap(l)}, \mathcal{M}_l) \\
	& = & p(y^{(-k)\cap(l)}|y^{(k)\cap(l)}, \mathcal{M}_l) p(y^{(-k)\cap(-l)}|y^{(l)}, y^{(k)\cap(-l)})
\end{eqnarray*}
\noindent
The first term is explicit given a model. Hence we can compute inductively 
$p(y^{(-k_1)\cap(-k_2)\cap\ldots\cap(-k_q)}|y^s)$
for all $q$-tuples $(k_1,\ldots,k_q)$ and subset of data $y^s \subset y$ with a decreasing recursion on $q$, where the first iteration corresponds to the simple case $(a)$. \\

\noindent
For practicality purposes it is important to notice that the complexity of the underlying algorithm is at most exponential in the number of models, where each iteration requires the computation of a predictive posterior of decreasing subsets of data given decreasing subsets of the data, plus $N$ computations of corrective likelihoods as in $(a)$, where $N$ is the number of $\emph{non-divisible}$ subsets.

%%%%%%%%%%%%%%%%%%%%%%%%%%%%%%%%%%%%%%%%%%%%%%%%%%%%%%%%%%%%%%%%%%%%%%%%%%%%%%%%%%%%%%%%%%%%%%%%%%%%
\section{Examples and applications charts} \label{annex-examples}
%%%%%%%%%%%%%%%%%%%%%%%%%%%%%%%%%%%%%%%%%%%%%%%%%%%%%%%%%%%%%%%%%%%%%%%%%%%%%%%%%%%%%%%%%%%%%%%%%%%%

%%%%%%%%%%%%%%%%%%%%%%%%%%%%%%%%%%%%%%%%%%%%%%%%%%
\subsection{Averaging nuclear mass emulators in the Ca region}
%%%%%%%%%%%%%%%%%%%%%%%%%%%%%%%%%%%%%%%%%%%%%%%%%%

\subsubsection{GP model specifications}
%%%%%%%%%%%%%%%%%%%%%%%%%%%%%%%%%%%%%%%%%%%%%%%%%%

Given a theoretical nuclear model $y^{th}$ for the one- and two-neutron separation energies
we define the discrepancy function $\delta(x)$ from 
\begin{equation}
    y(x)=y^{th}(x) + \delta(x),
\end{equation}
for $x:=(Z,N)$ ranging over the two-dimensional nuclear domain,
and model with with the GP 
\begin{equation}
    \delta(Z,N) \sim \mathcal{GP}(0, k_{\eta, \rho}\{(Z, N), (Z^{\prime}, N^{\prime})\}),
\end{equation}
with mean $0$ and quadratic exponential covariance kernel with three parameters 
\begin{equation}
     k_{\eta, \rho}\{(Z, N), (Z^{\prime}, N^{\prime})\} = \eta^2 e^{-\frac{(Z-Z^{\prime})^2}{2 \rho^2_Z} -\frac{(N-N^{\prime})^2}{2 \rho^2_N}},
\end{equation}
with Gamma prior distributions
\begin{equation}
\eta, \rho_Z, \rho_N  \sim \Gamma(a, 1)
\end{equation}
with hyperprior parameters $b=1$ and $a$ respectively set to $0.8$, $0.5$ and $1.8$.
Viz. supplemental material to \cite{Neufcourt18-2}.

\subsubsection{Supplementary figures and tables}
%%%%%%%%%%%%%%%%%%%%%%%%%%%%%%%%%%%%%%%%%%%%%%%%%%

\begin{figure}[H]
\centering
\begin{subfigure}[a]{1\textwidth} 
\includegraphics[width=0.9\linewidth]{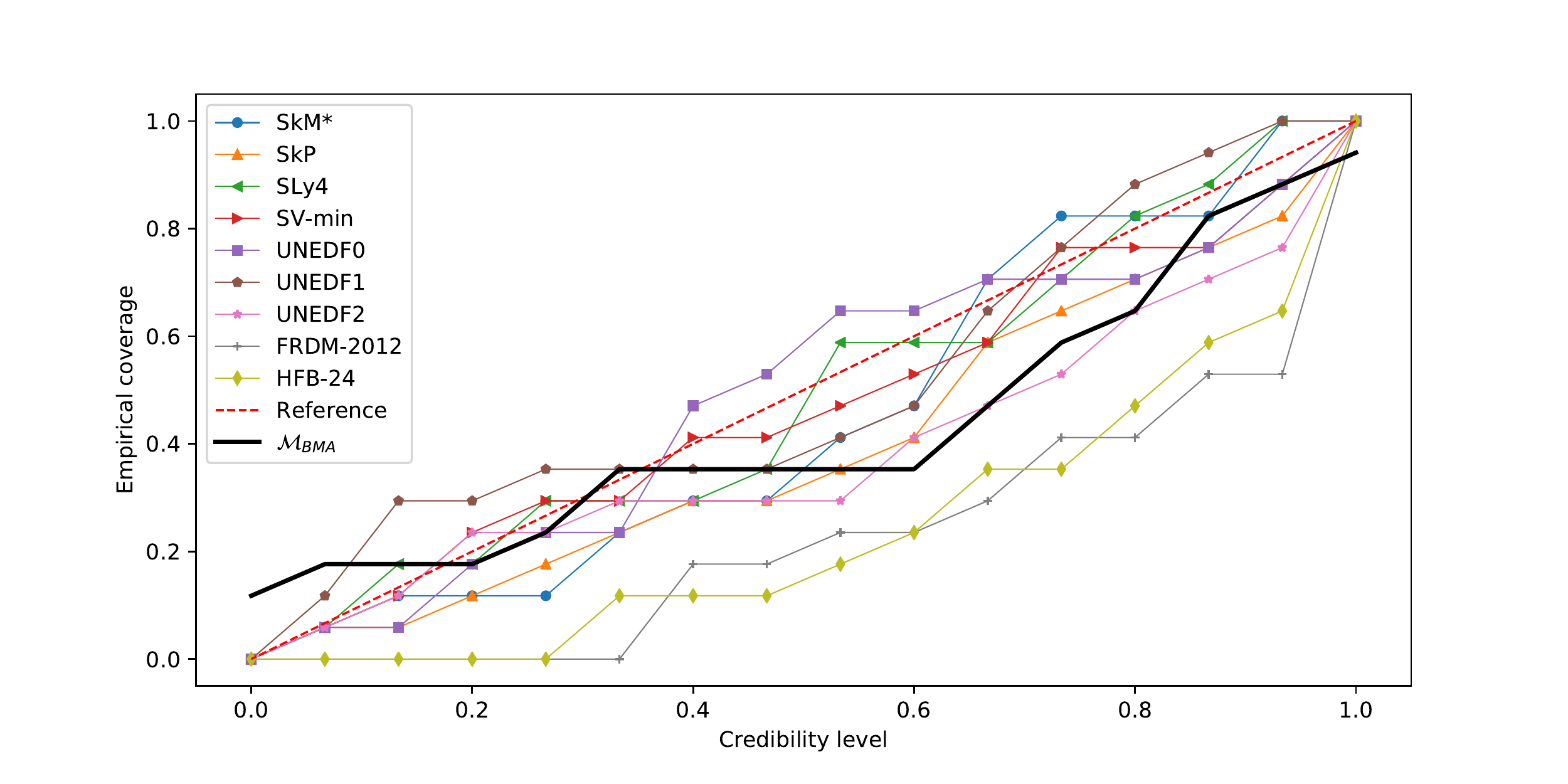}
\label{fig:ECP}
\end{subfigure}
\caption{Empirical coverage probability calculated on the independent testing dataset (AME2016 $\setminus$ AME2003).}
\end{figure}

\setlength{\tabcolsep}{4pt}
\renewcommand{\arraystretch}{0.7}
\begin{table}[!h]
\caption{Model posterior weights, RMSE (in MeV) and $\widehat{r}^2_{BMA}$ values calculated on both training (AME2003) and testing (AME2016 $\setminus$ AME2003) datasets,
for 3 nuclear mass models.
\label{table-rmse-sub}}
\begin{tabular}{l|l|l|l|l|ll|ll}
 \hline \noalign{\smallskip}
\textbf{} & \multicolumn{4}{c|}{\textbf{Model posterior weights}} & \multicolumn{4}{c}{\textbf{Errors}} \\ \noalign{\smallskip} \hline \noalign{\smallskip}
\textbf{} & \multicolumn{2}{c|}{\boldmath$S_{1n}$ (\textbf{odd N})} & \multicolumn{2}{c|}{\boldmath$S_{2n}$ (\textbf{even N})} & \multicolumn{2}{c|}{\textbf{Training}} & \multicolumn{2}{c}{\textbf{Testing}} \\\noalign{\smallskip} \hline \noalign{\smallskip}
\textbf{Model} & \textbf{even Z} & \textbf{odd Z} & \textbf{even Z} & \textbf{odd Z}  & \textbf{RMSE} & \boldmath$\widehat{r}^2_{BMA}$ & \textbf{RMSE} & \boldmath$\widehat{r}^2_{BMA}$ \\
\noalign{\smallskip} \hline \noalign{\smallskip}
\textbf{SkM*} & 0.000 & 0.001 & 0.000 & 0.000 & 0.142 & 0.375 & 0.925 & 0.413 \\
 \textbf{FRDM-2012} & 1.000 & 0.997 & 0.900 & 0.399  & 0.114 & 0.031 & 0.808 & 0.231 \\
 \textbf{HFB-24} & 0.000 & 0.002 & 0.100 & 0.601 & 0.146 & 0.405 & 0.806 & 0.227 \\
 \noalign{\smallskip} \hline \noalign{\smallskip}
$\mathcal{M}_{BMA}$ & & & & & 0.112 & - & 0.709 & - \\ \noalign{\smallskip} \hline \noalign{\smallskip}
\end{tabular}
\end{table}

\setlength{\tabcolsep}{4pt}
\renewcommand{\arraystretch}{0.7}
\begin{table}[!h]
\caption{Sample size breakdown for the training (AME2003) and testing (AME2016 $\setminus$ AME2003) datasets of nuclear separation energies in the Calcium region,
according to $Z$ and $N$ parities.
\label{table-size}}
\begin{tabular}{l|l|l|l|l}
 \hline \noalign{\smallskip}
\textbf{} & \multicolumn{4}{c}{\textbf{Sample Size}} \\ \noalign{\smallskip} \hline \noalign{\smallskip}
\textbf{} & \multicolumn{2}{c|}{\boldmath$S_{1n}$ (\textbf{odd N})} & \multicolumn{2}{c}{\boldmath$S_{2n}$ (\textbf{even N})} \\ \noalign{\smallskip} \hline \noalign{\smallskip}
\textbf{Dataset} & \textbf{even Z} & \textbf{odd Z} & \textbf{even Z} & \textbf{odd Z} \\
\noalign{\smallskip} \hline \noalign{\smallskip}
\textbf{AME2003} & 41 & 31 & 39 & 28  \\
 \textbf{AME2016 $\setminus$ AME2003} & 3 & 3 & 3 & 5   \\
 \noalign{\smallskip} \hline \noalign{\smallskip}
\end{tabular}
\end{table}

%%%%%%%%%%%%%%%%%%%%%%%%%%%%%%%%%%%%%%%%%%%%%%%%%%
\subsection{Domain-corrected BMA}
%%%%%%%%%%%%%%%%%%%%%%%%%%%%%%%%%%%%%%%%%%%%%%%%%%

Table \ref{Ideal-mod-diff-dom-symm} gives a quantitative summary of the simulation results
in the symmetric scenario. Figure \ref{fig:Idealized-mean-ratio-app-symm} shows the posterior mean predictions for $\mathcal{M}_1$, $\mathcal{M}_2$, domain corrected BMA $\mathcal{M}_{BMA(Q)}$, and BMA with independent model domains $\mathcal{M}_{BMA(Q_0)}$. These were obtained for the pedagogical example \ref{sub-pedagogical} using the domain correction developed in Section \ref{sec-domain-correction}. The RMSE for both $BMA(Q_0)$ and $BMA(Q)$ is almost identical here (up to a roundoff error) due to the symmetric nature of both training dataset $y^{(k)}$ and the response functions (\ref{Ideal-model}).
{\renewcommand{\arraystretch}{0.5}
\begin{table}[!h] 
	\centering
	\begin{tabular}{c|llllccc}
		\noalign{\smallskip}
		\hline	\noalign{\smallskip}
			\multicolumn{1}{l}{\boldmath$D_{shared}$} & \textbf{Model} & \boldmath${RMSE}$ & \boldmath$p(y^{(k)} | \mathcal{M}_k)$ & \boldmath$p(y^{(-k)}|y^{(k)})$ &
			\boldmath$Q_{0}$ &
			\boldmath$Q$ & \multicolumn{1}{l}{\boldmath$\widehat{r}^2_{BMA}$} \\
			\noalign{\smallskip}
			\hline	\noalign{\smallskip}
			\multirow{3}{*}{0.2} & $\mathcal{M}_{1}$ & 4.69 & $2.78 \cdot 10^{-21}$ & $1.98 \cdot 10^{-19}$ & \multirow{3}{*}{1.02} & \multirow{3}{*}{0.96} &\boldmath$0.512$\\
			& $\mathcal{M}_{2}$ & 4.58 & $2.73 \cdot 10^{-21}$ & $2.11 \cdot 10^{-19}$ & & & \boldmath$0.488$  \\
			& $\mathcal{M}_{BMA(Q_0)}$ & 3.28 & - & - & & & -   \\
			& $\mathcal{M}_{BMA(Q)}$ & 3.28 & - & - & & & -   \\
			\noalign{\smallskip}
			\hline	\noalign{\smallskip}
			\multirow{3}{*}{0.4} & $\mathcal{M}_{1}$ & 4.64 & $7.99 \cdot 10^{-20}$ & $4.33 \cdot 10^{-16}$ & \multirow{3}{*}{1.01} & \multirow{3}{*}{1.10} & \boldmath$0.511$ \\
			& $\mathcal{M}_{2}$ & 4.53 & $7.95 \cdot 10^{-20}$ & $3.96 \cdot 10^{-16}$ & & & \boldmath$0.486$ \\
			& $\mathcal{M}_{BMA(Q_0)}$ & 3.24 & - & - & & & -   \\
			& $\mathcal{M}_{BMA(Q)}$ & 3.25 & - & - & & & -   \\
			\noalign{\smallskip}
			\hline	\noalign{\smallskip}
			\multirow{3}{*}{0.6} & $\mathcal{M}_{1}$ & 4.37 & $3.32 \cdot 10^{-18}$ & $8.59 \cdot 10^{-12}$ & 
			\multirow{3}{*}{1.01} &
			\multirow{3}{*}{1.11} & \boldmath$0.504$ \\
			& $\mathcal{M}_{2}$ & 4.33 & $3.29 \cdot 10^{-18}$ & $7.84 \cdot 10^{-12}$ & & & \boldmath$0.495$   \\
			& $\mathcal{M}_{BMA(Q_0)}$ & 3.07 & - & - & & & -   \\
			& $\mathcal{M}_{BMA(Q)}$ & 3.08 & - & - & & & -   \\
			\noalign{\smallskip}
			\hline	\noalign{\smallskip}
			\multirow{3}{*}{0.8}& $\mathcal{M}_{1}$ & 3.61 & $1.45 \cdot 10^{-16}$ & $2.99 \cdot 10^{-6}$ & 
			\multirow{3}{*}{1.02} & \multirow{3}{*}{1.03} & \boldmath$0.509$ \\
			& $\mathcal{M}_{2}$ & 3.56 & $1.42 \cdot 10^{-16}$ & $2.98 \cdot 10^{-6}$ & & & 
			\boldmath$0.495$   \\
			& $\mathcal{M}_{BMA(Q_0)}$ & 2.53 & - & - & & & -   \\
			& $\mathcal{M}_{BMA(Q)}$ & 2.53 & - & - & & & -   \\
			\noalign{\smallskip}
			\hline	\noalign{\smallskip}
	\end{tabular}
	\caption{
	Summary of the domain corrected BMA
	analysis in the asymmetric case of the pedagogical example. 
	The RMSE (in MeV) was calculated based the set of common observations ($x \leq 5$).
	$BMA(Q)$ and $BMA(Q_0)$ represent respectively domain corrected BMA and BMA with independent model domains.
	$Q$ denotes the posterior odds ratio $p(y^{(-1)}|y^{(1)})p(\mathcal{M}_1| y^{(1)})/ [p(y^{(-2)}|y^{(2)})p(\mathcal{M}_2| y^{(2)})]$ used to draw samples from the mixture distribution \eqref{eq-bma-diff-dom}
	and $Q_{0}$ is the ratio $p(\mathcal{M}_1| y^{(1)})/ p(\mathcal{M}_2| y^{(2)})$. The MSE improvement $\widehat{r}^2_{BMA}$ is w.r.t. BMA with domain correction.
	\label{Ideal-mod-diff-dom-symm}
	}.
	
\end{table}
}

\begin{figure}[H] 
	\begin{centering}
		\includegraphics[width=0.75\textwidth]{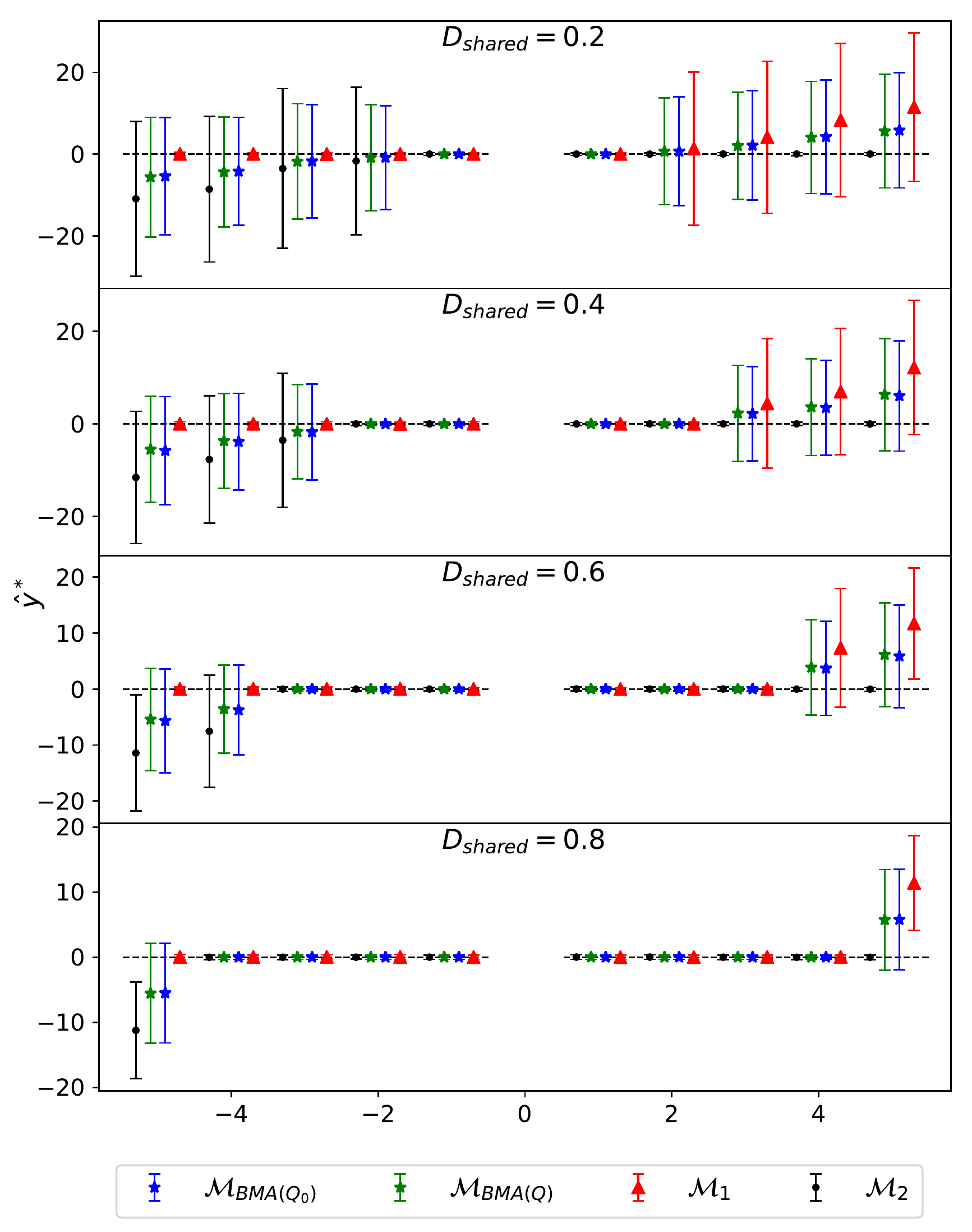}
		\caption{Posterior mean predictions (with 1-sigma error bars) for the 10 observations $y$ for the two models in \eqref{Ideal-model} as well as their BMA, with domain correction ($BMA(Q)$) and with the assumption of independent model domains ($BMA(Q_0)$). The dashed line segments represent the translated values of the original observations.}
		\label{fig:Idealized-mean-ratio-app-symm}
	\end{centering}
\end{figure}

\end{appendices}
\newpage
\bibliography{biblio}
%\printbibliography
\end{document}